\newacronym{3gpp}{3GPP}{3rd Generation Partnership Project}
\newacronym{ai}{AI}{Artificial Intelligence}
\newacronym{aoii}{AoII}{Age of Incorrect Information}
\newacronym{aoi}{AoI}{Age of Information}
\newacronym{bs}{BS}{base station}
\newacronym{dt}{DT}{digital twin}
\newacronym{drx}{DRX}{discontinuous reception }
\newacronym{ppp}{PPP}{Poisson Point Process}
\newacronym{su}{SU}{scheduling unit}
\newacronym{ofdm}{OFDM}{Orthogonal Frequency Division Multiplexing}
\newacronym{wud}{WuD}{wake-up radio device}
\newacronym{wus}{WuS}{wake-up signal}
\newacronym{idwu}{IDWu}{identity-based wake-up}
\newacronym{fifo}{FIFO}{First In First Out}
\newacronym{awgn}{AWGN}{Additive White Gaussian Noise}
\newacronym{mac}{MAC}{Medium Access Control}
\newacronym{pdf}{pdf}{probability density function}
\newacronym{rb}{RB}{resource block}
\newacronym{ucwu}{UCWu}{UniCast Wake-up}
\newacronym{ul}{UL}{uplink}
\newacronym{ue}{UE}{User equipment}
\newacronym{dl}{DL}{downlink}
\newacronym{harq}{HARQ}{hybrid automatic repeat request}
\newacronym{id}{ID}{identifier}
\newacronym{iiot}{IIoT}{Industrial Internet of Things}
\newacronym{ack}{ACK}{Acknowledgement}
\newacronym{mmtc}{mMTC}{massive Machine Type Communication}
\newacronym{mse}{MSE}{mean square error}
\newacronym{urllc}{URLLC}{ultra-reliable low-latency communication}
\newacronym{pmf}{pmf}{probability mass function}
\newacronym{iot}{IoT}{Internet of Things}
\newacronym{cps}{CPS}{Cyber-Physical System}
\newacronym{nr}{NR}{New Radio}
\newacronym[\glsshortpluralkey={r.vs.}]{rv}{r.v.}{random variable}
\newacronym{ss}{SS}{Synchronization Signal}
\newacronym{wur}{WuR}{wake-up receiver}
\newacronym{pdcch}{PDCCH}{Physical Downlink Control Channel}
\newacronym{5G}{5G}{Fifth Generation}
\newacronym{voi}{VoI}{Value of Information}
\newacronym{wsn}{WSN}{wireless sensor network}
\newacronym{xr}{XR}{eXtended Reality}
\newacronym{mcs}{MCS}{Modulation and Coding Scheme}
\newacronym{fsa}{FSA}{Framed Slotted ALOHA}
\newcolumntype{Y}{>{\centering\arraybackslash}X}
\newcolumntype{?}{!{\vrule width 1.5pt}}
\algnewcommand{\LeftComment}[1]{\Statex \(\triangleright\) \textit{#1}}
\pgfplotsset{
    compat=1.3,
    legend style={font=\scriptsize, fill opacity=1,  draw opacity=1, text opacity=1, draw=white!15!black, legend cell align=left, align=left},     
    ymajorgrids=true,
    xmajorgrids=true,    
    yminorticks=false,
    xminorticks=false,
    grid style={dashed},
    title style={font=\small},
    label style={font=\footnotesize},
    tick label style={font=\footnotesize},    
    tick align=inside,
    axis background/.style={fill=white},
    ylabel shift=-5pt,
    xlabel shift=-3pt,
}
\algrenewcommand\algorithmicindent{10pt}
\def \twowidth{0.345\columnwidth}
\def \four{0.32\columnwidth}
\newcolumntype{C}[1]{>{\centering\arraybackslash}p{#1}}
\definecolor{amaranth}{rgb}{0.9, 0.17, 0.31}
\definecolor{steelblue}{RGB}{176,196,222}
\definecolor{darkblue}{RGB}{0,0,139}
\definecolor{lightblue}{RGB}{31,119,180}
\definecolor{deepskyblue}{RGB}{0,191,255}
\definecolor{lightskyblue}{RGB}{135,206,250}
\definecolor{lightgray}{rgb}{0.82, 0.82, 0.82}
\definecolor{gray}{RGB}{140,140,140}
\definecolor{darkgray204}{RGB}{204,204,204}
\definecolor{darkgray224}{RGB}{224,224,224}
\newcommand{\revise}[1]{\textcolor{black}{#1}}
\newtheorem{remark}{Remark}
\newtheorem{lemma}{Lemma}
\newtheorem{assumption}{Assumption}
\newcommand{\E}[1]{\mathbb{E}\left[ #1 \right]} % expectation
\newcommand{\mc}[1]{\mathcal{#1}}   % mathcal abbreviation
\def\@citex[#1]#2{\leavevmode
\let\@citea\@empty
\@cite{\@for\@citeb:=#2\do
{\@citea\def\@citea{,\penalty\@m\ }%
\edef\@citeb{\expandafter\@firstofone\@citeb\@empty}%
\if@filesw\immediate\write\@auxout{\string\citation{\@citeb}}\fi
\@ifundefined{b@\@citeb}{\hbox{\reset@font\bfseries ?}%
\ G@refundefinedtrue
\@latex@warning
{Citation `\@citeb' on page \thepage \space undefined}}%
{\@cite@ofmt{\csname b@\@citeb\endcsname}}}}{#1}}
\newtcolorbox[auto counter]{summary}[1][]{title={\bfseries Summary~\thetcbcounter},enhanced,drop shadow={black!50!white},
  coltitle=black,
  top=0.3in,
  attach boxed title to top left=
  {xshift=1.5em,yshift=-\tcboxedtitleheight/2},
  boxed title style={size=small,colback=white},#1}
\newtcolorbox[auto counter]{channel}[1][]{title={\bfseries Summary~\thetcbcounter},enhanced,drop shadow={black!50!white},
  coltitle=black,
  top=0.3in,
  attach boxed title to top left=
  {xshift=1.5em,yshift=-\tcboxedtitleheight/2},
  boxed title style={size=small,colback=white},#1}
\begin{document}

\title{Dual-Mode Wireless Devices for \\ Adaptive Pull and Push-Based Communication}

\author{Sara Cavallero, 
        Fabio Saggese,~\IEEEmembership{Member,~IEEE,} 
        Junya Shiraishi,~\IEEEmembership{Member,~IEEE,}
        Israel Leyva-Mayorga,~\IEEEmembership{Member,~IEEE,} 
        Shashi Raj Pandey,~\IEEEmembership{Member,~IEEE,}
        Chiara Buratti, 
        Petar Popovski~\IEEEmembership{Fellow,~IEEE.}
        \thanks{S. Cavallero is  with the National Laboratory of Wireless Communications (WiLab), CNIT, Italy (e-mail: sara.cavallero@wilab.cnit.it). C. Buratti is with the Dep. of Electrical, Electronics, and Information Engineering "Guglielmo Marconi" of the University of Bologna, Italy (email: c.buratti@unibo.it).        
         J. Shiraishi, I. Leyva-Mayorga, S. R. Pandey, and P. Popovski are with the Electronic Systems Dep., Aalborg University, Denmark (emails: \{srp, jush, ilm, petarp\}@es.aau.dk). F. Saggese is with Dep. of Information Engineering, University of Pisa, Italy (fabio.saggese@ing.unipi.it), and his work is funded by Horizon Europe MSCA Postdoctoral Fellowships with Grant~101204088. This work was partly supported by the Villum Investigator Grant ``WATER" from the Velux Foundation, Denmark, partly by the Horizon Europe SNS ``6G-XCEL" project with Grant 101139194, and partly by the Horizon Europe SNS ``6G-GOALS'' project with grant 101139232. The work of J. Shiraishi was supported by Horizon Europe MSCA Postdoctoral Fellowships with Grant~101151067.}}%

% The paper headers
% \markboth{Journal of Internet of Things}%,~Vol.~14, No.~8, August~2015}%
% {Authors \MakeLowercase{\textit{et al.}}: Title}

% make the title area
\maketitle

\begin{abstract}

This paper introduces a dual-mode communication framework for wireless devices that integrates query-driven (pull) and event-driven (push) transmissions within a unified time-frame structure. Devices typically respond to information requests in pull mode, but if an anomaly is detected, they preempt the regular response to report the critical condition. Additionally, push-based communication is used to proactively send critical data without waiting for a request. This adaptive approach ensures timely, context-aware, and efficient data delivery across different network conditions.
To achieve high energy efficiency, we incorporate a wake-up radio mechanism and we design a tailored medium access control (MAC) protocol that supports data traffic belonging to the different communication classes. A comprehensive system-level analysis is conducted, accounting for the wake-up control operation and evaluating three key performance metrics: the success probability of anomaly reports (push traffic), the success probability of query responses (pull traffic) and the total energy consumption. Numerical results characterize the system's behavior and highlight the inherent trade-off %in success probabilities between push- and pull-based traffic
between push and pull success probabilities as a function of allocated communication resources. Our analysis demonstrates that the proposed approach \revise{achieves up to a 42\% reduction in energy consumption per served packet} compared to \revise{traditional approaches}, while maintaining reliable support for both communication paradigms.

\end{abstract}

\begin{IEEEkeywords}

goal-oriented communications, pull-based communications, wake-up radios, medium access control.
\end{IEEEkeywords}

\section{Introduction}
\label{sec:introduction}

Advances in communication technologies are enabling more adaptive and resource-efficient wireless systems. Emerging applications such as digital twins and \gls{xr} demand low-latency, energy-efficient connectivity and seamless coordination between the physical and digital domains. To meet these requirements, \revise{a critical challenge is} % the \gls{bs} plays a critical role by 
selectively collecting sensing data from devices \revise{to a central controller--e.g., a \gls{bs}--}leveraging \revise{digital models of the physical environment} to guide communication towards application-specific goals. This interaction calls for efficient two-way communication and motivates the development of new frameworks capable of balancing inherent trade-offs~\cite{OsamaEnergyLatency}. 

Within \revise{the context of data aggregation}, two fundamentally different communication modes have been extensively studied: \emph{push}-based and \emph{pull}-based communications, each presenting distinct implications for \gls{mac} protocol design~\cite{pandey2025mediumaccesspushpulldata}. Push-based communication is an \emph{event-driven approach} in which devices autonomously decide to transmit data when a predefined condition is met, such as detecting an anomaly or an environmental change. While this enables informative updates, it suffers from contention caused by uncoordinated channel access. 
Conversely, pull-based communication operates on a demand-driven basis, where a central controller requests data only when needed. This approach reduces redundant transmissions and improves resource utilization\revise{~\cite{agheli2024effective}}.
\revise{However, it introduces latency since data is retrieved only upon request, which may delay time-sensitive updates. Additionally, it relies on the central controller's \emph{belief} about the network's status, potentially overlooking sensitive changes in the environment~\cite{pandey2025mediumaccesspushpulldata}.}
\revise{A key challenge in practical deployments is that the occurrence of critical events and the identity of the devices observing them are generally unknown a priori. As a result, relying solely on pull-based communication may lead to missed or delayed detection of time-sensitive events, since data is only retrieved upon request. Conversely, purely push-based approaches can result in inefficient resource usage due to uncoordinated access and unpredictable traffic demand. This fundamental uncertainty motivates the need for communication frameworks that can simultaneously support event-driven transmissions and query-driven data collection. Nevertheless,} 
despite their respective advantages, push- and pull-based strategies are often treated separately, with limited research exploring their joint operations% within a unified framework
~\cite{grandient-opt, agheli2025integratedpushandpullupdatemodel, talli2024push}. A hybrid approach that smartly integrates both paradigms could enable wireless systems to adapt to diverse traffic patterns, ensuring timely event-driven responses alongside efficient scheduled data retrieval.

To address this \revise{challenge from a \gls{mac} perspective}, we propose a dual-mode communication framework that integrates push and pull mechanisms within a time-slotted structure inspired by the \gls{3gpp}. Each frame is divided into two sub-frames: $i$) a pull sub-frame for scheduled retrieval \revise{of external queried data}, and $ii$) a push sub-frame where devices \revise{detecting anomalous} data contend for access using a \gls{fsa} approach~\cite{Aloha_push}. 
\revise{Building upon our earlier studies on push/pull coexistence~\cite{cavallero2024co-existence, Shiraishi_2024_globecom},  a key innovation of the proposed framework is its support to device-level adaptation, allowing the devices to dynamically switch between push and pull modes depending on their internal state:} % and traffic type. % Pull-based communication handles external queries while push-based communication addresses spontaneous anomaly events.
a device that has detected an anomaly will respond to pull requests \emph{preempting} the data related to the anomaly over the content required by the query, enhancing responsiveness without incurring additional energy costs.
\revise{Notably, our approach enables the direct use of ultra-low-power \glspl{wur}~\cite{WuR_pull, piyare2017ultra}, which allows continuous devices availability with minimal level of activity and thereby low energy consumption}.
% This flexibility is enabled by ultra-low-power \glspl{wur}~\cite{WuR_pull, piyare2017ultra}, which allow continuous availability with minimal energy consumption.

% This work , which investigated shared access frameworks with separate device roles and centralized coordination. In contrast, the present framework supports device-level adaptation, allowing each node to autonomously select its communication mode in real time based on its internal state and network conditions, enhancing responsiveness and resource efficiency.

We develop a model to analyze how the proposed framework balances \revise{\emph{push and pull success probabilities}} and \emph{\revise{devices'} energy consumption} under diverse traffic loads. 
\revise{Our metrics evaluate how much \emph{relevant data} can be successfully collected at the \gls{bs} under a given \emph{energy expense}, linking the chosen push/pull design to a quality of service--e.g., latency, information freshness, energy--which is important for a variety of applications.}
\revise{Numerical simulations are performed modeling pull traffic as a Poisson process with strict per-frame deadlines, while push traffic via both correlated and uncorrelated Binomial process.\footnote{\revise{These models are widely used in \gls{iot} research to capture uncoordinated, low-rate query arrivals and sporadic events, providing analytical tractability while reasonably approximating typical device behavior~\cite{Metzger2019Modeling, RUIZGUIROLA2025126726}.}}}
Our results demonstrate the effectiveness in supporting scalable and energy-efficient wireless systems, particularly in scenarios requiring responsiveness to both scheduled and spontaneous data flows.

\subsection{Related Works}
\label{sec:related_work}

Pull-based communication is an attractive approach to reduce unnecessary data transmissions of wireless devices \revise{ by considering the data necessity from the application layer requirements~\cite{agheli2024effective}. It is proved} useful and efficient for a variety of \gls{iot} monitoring applications~\cite{akar2024query, kalor2025data,Agheli2026pullquery} and informative data collection for controlling mobile actuators~\cite{shiraishi2022query}. % MERGED THE ONE FROM JUNYA
%The latest work designs a pull-based communication scheme to retrieve only the relevant data, considering the data necessity in the applications level requirements~\cite{shiraishi2022query, agheli2024effective,kalor2025data}. 
%\revise{\JS{The latest work designs a pull-based communication scheme to retrieve/schedule only the data that is relevant/effective, considering the data necessity in the applications level requirements~\cite{shiraishi2022query, agheli2024effective,kalor2025data,Agheli2026pullquery}. }}
\revise{In this direction, pull-based communication has also been explored for remote tracking~\cite{zakeri2025goal,vilni2025real}, where goal-oriented and correlation-aware scheduling policies are designed based on the estimation error and transmission cost, to retrieve information that is only relevant to the estimation objective.}
Furthermore, in the context of \gls{iot} network, energy efficiency is a key requirement to prolong the network lifetime. To this end, the introduction of energy-saving technology for idle-listening of communication is important. 
In the context of cellular networks, \gls{drx}~\cite{lin2022survey} is well investigated, in which \gls{ue} periodically activates its main radio with \gls{drx} cycle to check the communication requests. However, this approach faces a challenge in managing the trade-off between the energy consumption and latency through the \gls{drx} cycle~\cite{rostami2020wake}. To deal with this problem, the concept of wake-up radio access has been introduced~\cite{rostami2020wake,froytlog2019ultra}. 
Recently, the 3GPP introduced the low-power \gls{wur} and \gls{wus} in Rel-18~\cite{hoglund20243gpp,3gpp:38-869,wagner2023low}, in which a separate ultra-low-power \gls{wur} is installed into \gls{iot} device. The specific architecture of \gls{wur} and \gls{wus} design has been investigated in~\cite{3gpp:38-869}. Introducing wake-up radio enables the \gls{bs} to retrieve data from the target devices in a demand-driven way. In the context of cellular networks, the authors in~\cite{ruiz2022energy} introduced a wake-up scheme that controls the parameter of wake-up control based on the traffic forecasting model. Also, the authors in \cite{rostami2019wake_model} analyzed the system-level performance of wake-up radio in terms of energy efficiency under the latency constraint. 

In contrast, push-based communication is essential for latency-sensitive scenarios such as industrial monitoring. 
Grant-free access protocols, such as ALOHA, are commonly employed due to their simplicity.
In this context, recent advances have focused on optimizing these protocols to improve data freshness and responsiveness. For instance, \cite{Aloha_push} conducted a detailed \gls{aoi} analysis of \gls{fsa} in large-scale \gls{iot} systems, deriving optimal settings for frame size and access probability to balance freshness and system throughput.  
Authors in~\cite{Aloha_mmtc} further developed adaptive frame sizing based on \gls{aoi} to enhance timeliness, while in ~\cite{chiariotti2024distributed}, the authors proposed an access protocol employing dynamic epistemic logic to enable distributed sensors to estimate each other's \gls{aoii} and coordinate transmissions. 

Recent studies have begun to explore the interplay between push- and pull-based communication. % more explicitly. 
In~\cite{talli2024push}, the authors model the trade-off between push and pull communications in cyber-physical systems as a Markov decision process, providing an optimal strategy maximizing the \gls{voi}. However, their focus is on selecting the most effective global communication policy (push or pull) for a given system scenario. %, rather than enabling individual devices to switch dynamically between the two modes.
Similarly, in~\cite{agheli2025integratedpushandpullupdatemodel}, the authors present an integrated push-and-pull update model optimized for effective goal-oriented communication.  
While this work conceptually bridges the two paradigms, it focuses on system-wide optimization rather than per-device adaptation.~\cite{chiariotti2026combined} \revise{analyzes the performance of a push-pull solution for the tasks of digital twin alignment (pull) and anomaly reporting (push) with no regard on the energy efficiency of the system.}

% Except for our earlier works~\cite{cavallero2024co-existence, Shiraishi_2024_globecom}, no work analyzes the system of coexistence of push/pull traffic considering the \gls{mac} structure. 
\revise{To the best of our knowledge, our earlier works~\cite{cavallero2024co-existence, Shiraishi_2024_globecom} were the only focusing on \gls{mac} structure enabling \gls{wur} technology.}
In \cite{cavallero2024co-existence}, we considered a system employing \glspl{wur} with \gls{idwu}, while \cite{Shiraishi_2024_globecom} adopted a content-based wake-up mechanism, where activation depends on the relevance of the devices' data. Those studies focused on coordinating distinct devices, each operating exclusively in either push or pull mode\revise{, without no %, under unified network management and scheduling strategies. However, they did not 
support of device-level dynamic adaptation.} % of communication behavior at the device level. 
The system and \gls{mac} frame designed in this paper overcomes this limitation by allowing each node to autonomously select its communication mode, i.e., push or pull, based on the importance of observed data and on real-time conditions, offering greater responsiveness and efficiency. 

% \fs{[We could cite my recent work on push-pull~\cite{chiariotti2025combined} while stressing that no work so far analyzed WuR + anything that is a bit more updated on pull like the one~\cite{Agheli2026pullquery}. I don't know if @Junya has something more.]}

\subsection{Contributions}
\label{sec:contributions}
The main contribution of this paper is the design and analysis of a novel dual-mode communication framework that enables per-device adaptability between push- and pull-based communication in wireless access networks. 
Specifically, the contribution of this paper can be summarized as follows:
\begin{itemize}
    \item   \emph{Adaptive dual-mode capability}: A unified communication model is proposed where each wireless device can act in push- or pull-mode depending on real-time context (e.g., presence of an anomaly or external query), enabling flexible and efficient operation.
    \item \emph{Unified \gls{mac} frame structure}: %A time-slotted \gls{mac} frame structure inspired by \gls{3gpp} design is introduced;
    \revise{A \gls{3gpp}-inspired time-slotted \gls{mac} frame structure enabling \gls{wur} implementation is introduced};
    the frame is divided into pull and push sub-frames, offering support for data transmission from different communication mode.
    \item \emph{Modeling and system-level performance characterization}: 
    A stochastic analytical model is developed to evaluate system-level performance in terms of \revise{push and pull} success probabilities and \revise{devices'} energy consumption. Numerical evaluation validates the analytical model, highlighting the trade-off between push/pull performance based on resource allocation and determines the feasible operating region for supporting different levels of traffic. % given varying communication requirements. 
    
    \item \emph{Energy-efficient design}: \revise{We prove how} the integration of low-power \glspl{wur} at the devices and \glspl{wus} in the associated \gls{mac} protocol can reduce the overall energy consumption \revise{w.r.t. traditional approaches}.  
    % The energy model presented demonstrates that our communication framework achieves higher energy efficiency while meeting service requirements better than traditional approaches.
\end{itemize}

The remainder of the paper is organized as follows. Sec.~\ref{sec:system-model} describes the system model, outlining the integration of push and pull mechanisms and the role of \glspl{wur}. Sec.~\ref{sec:performance} presents a detailed mathematical analysis of the proposed system performance, focusing on success probability and energy consumption. Sec.~\ref{sec:results} presents numerical results obtained through simulations. Finally, Section \ref{sec:conclusion} concludes the paper. 

\paragraph*{Notations}
% $\Pr(\cdot)$ is the probability of an event; 
$\mc{P}(n = i)$ represents the \gls{pmf} of \gls{rv} $n$; $n|m$ denotes the \gls{rv} $n$ conditioned to the knowledge of $m$, while $\mc{P}(n=i | m=k)$ is its \gls{pmf}. % $n\sim\mathrm{Pois}(\mu)$ denotes that the \gls{rv} $n$ is Poisson distributed with mean value $\mu$, i.e., $\mc{P}(n=i) = \frac{\mu^i  e^{-\mu}}{i!}$; 
% $n\sim\mathrm{B}(N, p)$ denotes that the \gls{rv} $n$ is Binomial distributed with parameters $N$ and $p$, i.e., $\mc{P}(n=i) = \binom{N}{i} p^i (1 - p)^{N-i}$;
$\E{\cdot}$ represents the expectation operator. 
$\lfloor x \rfloor$ denotes the highest integer lower than $x$.

\section{System Model}
\label{sec:system-model}
We consider a \revise{\gls{wsn}} where a \gls{bs} manages access control for $N$ devices, also referred to as nodes, communicating over a shared wireless medium. Each node has three main components: 1) an always-on sensor that continuously monitors a specific process and stores the collected data locally; 2) an ultra-low-power \gls{wur}, which may remain active or inactive, and monitors incoming communication requests for activating the main radio (cf. Sec.~\ref{sec:model:energy}); and 3) a main radio to transmit its data packets, which remains inactive until it is triggered by the \gls{wur} or by anomalies detected by the sensor -- cf. Sec.~\ref{sec:model:anomalies} and Sec.~\ref{sec:model:energy}.

The system operates in two main modes to retrieve data from the sensors, namely, \emph{pull-based} and \emph{push-based} communication modes, illustrated in Fig.~\ref{fig:system_model}. \emph{Pull-based} communication is initiated by an application in the cloud, which transmits \emph{queries} towards the \gls{bs} requesting data from specific devices in the sensor network. In turn, the \gls{bs} transmits unicast \gls{idwu} signals to activate the main radio interface of those devices; the target node \gls{id} is embedded into each \gls{wus}~\cite{3gpp:38-869}. Upon receiving a \gls{wus} that matches the node's ID, the node activates its main radio and transmits its latest observed data to the \gls{bs}. Conversely, \emph{push-based} communication is initiated by the nodes' sensor when a pre-defined condition is met, such that the node generates data that is critical for the application, and enters an \emph{alarm state}. When a node enters an alarm state, it activates its main radio interface and attempts to transmit its data in the next transmission opportunity.

In the following, we present the detailed model and assumptions about the \gls{mac} frame structure, the queries, anomalies, and the energy consumption.
The main variables used are summarized in Table~\ref{tab:notation}.

\begin{figure}[t]
    \centering
    \includegraphics[width=.95\columnwidth]{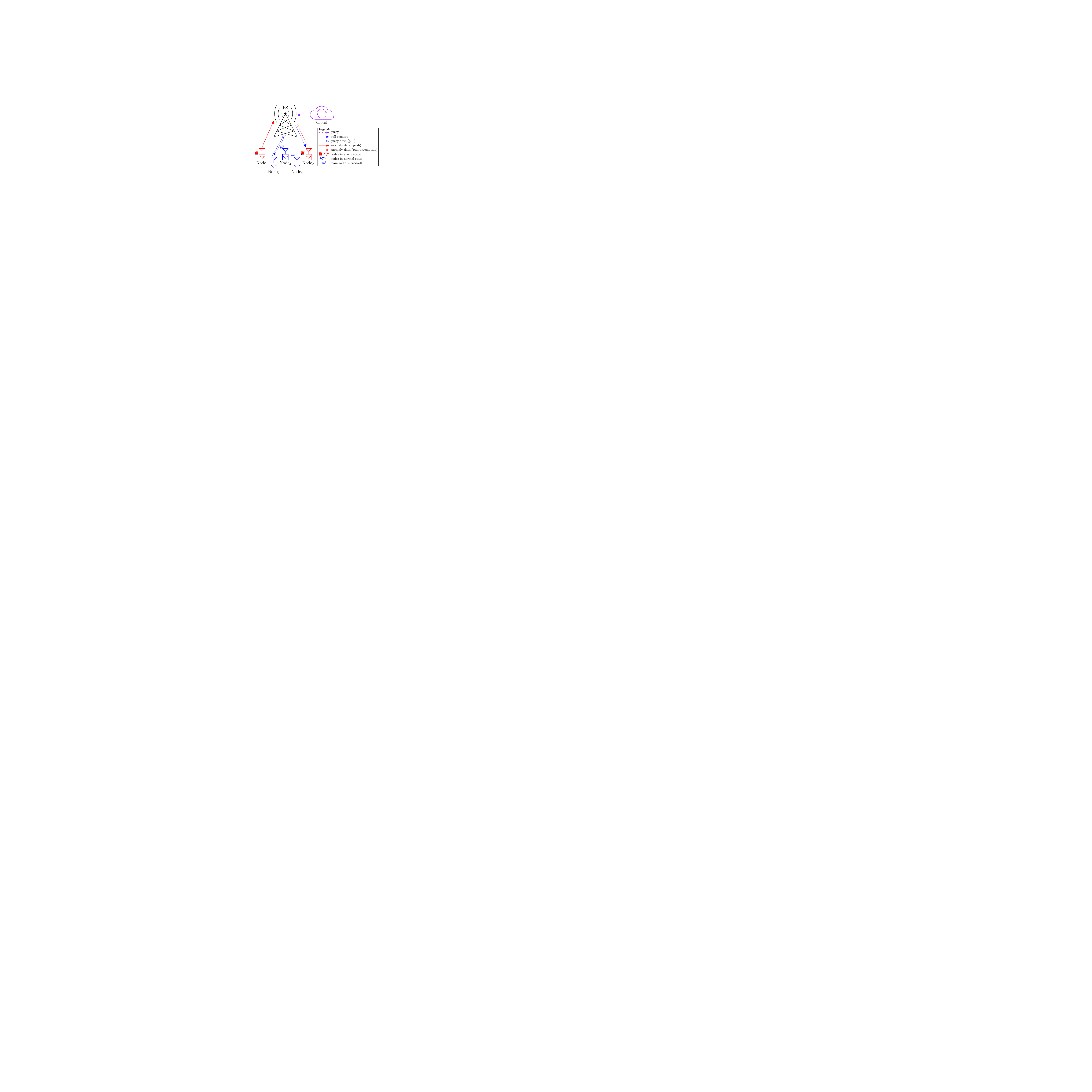}
    \caption{Toy-example of the scenario with $N=5$ nodes. Devices 1 is in the alarm state and has autonomously activated to transmit (push-based mode); device 2 has received a pull request from the \gls{bs} and sends the requested data after waking up its main radio (pull-based mode); devices 3 and 4 let their main radio off not having received pull requests or detected any anomaly; device 5 has received a pull request while in the alarm state and it preempts the anomaly data in the reply (pull-based mode).}
    \label{fig:system_model}
\end{figure}

\begin{table}[htb]
    \centering
    \caption{Main variables used throughout the paper.}
    \begin{tabular}{c | p{0.8\columnwidth}}
    \toprule
    \multicolumn{2}{c}{\textbf{Frame}} \\
    \midrule
    $T$ & Frame duration [s] \\
    $F$ & No. of slots in a frame \\
    $\tau$ & Slot duration [s] \\
    $\beta_t$ & Fraction of slot duration for UL data payload \\
    $\beta_r$ & Fraction of slot duration for ACK reception \\
    $k_w$ & No. of slots needed for a \gls{wus} transmission \\
    $Q$ & No. of \glspl{wus} and pull slots in the pull sub-frame \\
    $k_c$ & No. of slots for Control Signal in the push sub-frame \\
    $P$ & No. of push slots in the push sub-frame \\
    \midrule
    \multicolumn{2}{c}{\textbf{Nodes and packets}} \\    
    \midrule
    $N$ & No. of nodes in the system \\
    $n_q(t)$ & No. of queries collected at the \gls{bs} at the beginning of frame $t$ \\
    $n_n(t)$ & No. of new alarm packets generated at frame $t$ attempting access in $t+1$ \\
    $n_a(t)$ & No. of alarm packets attempting access within frame $t$ \\
    $n_w(t)$ & No. of alarmed nodes receiving the \gls{wus} at frame $t$ \\
    $n_p(t)$ & No. of alarm packets attempting access during push sub-frame at frame $t$ \\
    $n_s(t)$ & No. of alarm packets that succeed access in push sub-frame  at frame $t$ \\
    $n_f(t)$ & No. of alarm packets that fail access in push sub-frame access at frame $t$ \\
    \midrule
    \multicolumn{2}{c}{\textbf{Traffic}} \\    
    \midrule
    $\lambda_q$ & average queries arrival rate per node [queries/s] \\
    % $\alpha$ & probability of a node to be in alarm in each frame \\
    $\lambda_a$ & average alarm arrival rate per node [alarms/s] \\
    \midrule
    \multicolumn{2}{c}{\textbf{Power consumption}} \\
    \midrule
    $\xi_w$ & Power consumed by a node's \gls{wur}\\
    $\xi_r$ & Power consumed for reception by a node's main radio\\
    $\xi_t$ & Power consumed for transmission by a node's main radio\\
    \bottomrule
    \end{tabular}    
    \label{tab:notation}
\end{table}

\subsection{Frame structure and data transmission}
\label{sec:model:frame}

As shown in Fig.~\ref{fig:time-diagram}, the system operates in a time-slotted framework organized into frames, in which both \gls{ul} and \gls{dl} transmissions occur. Each frame has a duration denoted as $T$ and consists of $F$ slots, each lasting $\tau$ seconds, such that $\tau F = T$.  Inspired by the \gls{3gpp} 5G standards~\cite{3gpp:38-912}, $\tau$ is assumed to be a multiple of an \gls{ofdm} symbol, as in~\cite{cavallero_urllc, mini-slot}. We further assume that each slot is designed to accommodate a \gls{ul} \emph{data payload}, the corresponding \emph{\gls{ack}} \gls{dl} message, and a \emph{guard time} to switch between \gls{ul} and \gls{dl} transmissions, \revise{as described in~\cite{cavallero_urllc}}. The three parts occupy a pre-defined fraction of the slot duration $\tau$ denoted by $\beta_t$, $\beta_r$, and $1 - \beta_t - \beta_r$, respectively. Accordingly, the durations of the \gls{ul} and \gls{dl} transmissions are $\beta_t \tau$ and $\beta_r \tau$ seconds.
Similar to our preliminary work~\cite{cavallero2024co-existence}, each frame is further divided into two sub-frames: \textbf{pull sub-frame}, and \textbf{push sub-frame}, to accommodate pull and push \revise{transmissions}.

The \textbf{pull sub-frame} comprises $Q$ \glspl{wus}, each followed by a \emph{pull slot} dedicated to data transmission\revise{: a node receiving a \gls{wus} transmits its \emph{pull reply} in the subsequent pull slot.} \revise{Without loss of generality, we assume that the transmission of a \gls{wus} and the activation of the node's main radio require $k_w$ slots~\cite{cavallero2024co-existence, hoglund20243gpp}\footnote{This common assumption in the literature stems from the low achievable rate of \glspl{wus}. In practice, the duration also depends on the number of bits used to encode the \gls{id}, which is outside the scope of this work.}. The data transmission instead occupies the \gls{ul} portion of a slot, i.e., a duration of $\beta_t \tau$, with the payload assumed to fit entirely within this interval. Both $k_w$ and the data slot duration are configurable system parameters (e.g., via the modulation and coding scheme) and do not affect the validity of the model.}
Therefore, \revise{a pull request and its reply lasts for} % the  scheduled communication has duration 
$T_w = (k_w + 1) \tau$ [s], and the time reserved \revise{for the pull sub-frame is} %to the pull-based communication is
\begin{equation} \label{eq:tpull}
    T_{\rm pull} = Q T_w = Q (k_w + 1) \tau.
\end{equation}
%A node receiving a \gls{wus} transmits during the subsequent pull slot. In this case, 
%\fs{Pull replies are assumed to be \emph{error-free}.}

The \textbf{push sub-frame} comprises a \gls{dl} \emph{control signal} of $k_c$ slots indicating the beginning of this part of the frame, and $P$ \emph{push slots}. The time dedicated \revise{for the push sub-frame} is
\begin{equation} \label{eq:tpush}
    T_{\rm push} = (P + k_c) \tau.
\end{equation}
All the nodes in an alarm state receiving the control signal will contend for accessing the medium on the available $P$ push slots employing a \revise{random access scheme}.
\revise{For mathematical tractability, we employ the \emph{\gls{fsa}} protocol~\cite{wieselthier1989aloha}, but alternative schemes can be used with no significant change in the analysis.} In our case, the nodes in alarm state will transmit their data in one out of the $P$ available push slots chosen uniformly at random. % We consider a collision channel with no capture, i.e., transmissions are successful if and only if a single node transmits in one of the $P$ slots. 

\revise{Focusing on \gls{mac} performance evaluation, we consider a collision channel with no capture. This implies that pull replies are \emph{error-free}, while push attempts are successful if and only if a single node transmits in one of the $P$ slots.\footnote{\revise{Errors related to physical layers would affect both push and pull sub-frames in the same manner, thus, they are disregarded from the analysis.}}}

\begin{remark}
    Due to the fact that $T = T_\mathrm{pull} + T_\mathrm{push}$, the number of pull and push slots are related as follows:
    \begin{equation} \label{eq:pull-push-slots}
        Q = \left\lfloor \frac{F - P - k_c}{k_w + 1}\right\rfloor, \quad P = F - Q(k_w+1) - k_c.
    \end{equation}  
    \label{rem:duration}
\end{remark}

\begin{figure}[t!]
\begin{center}
    \includegraphics[width = .999\columnwidth]{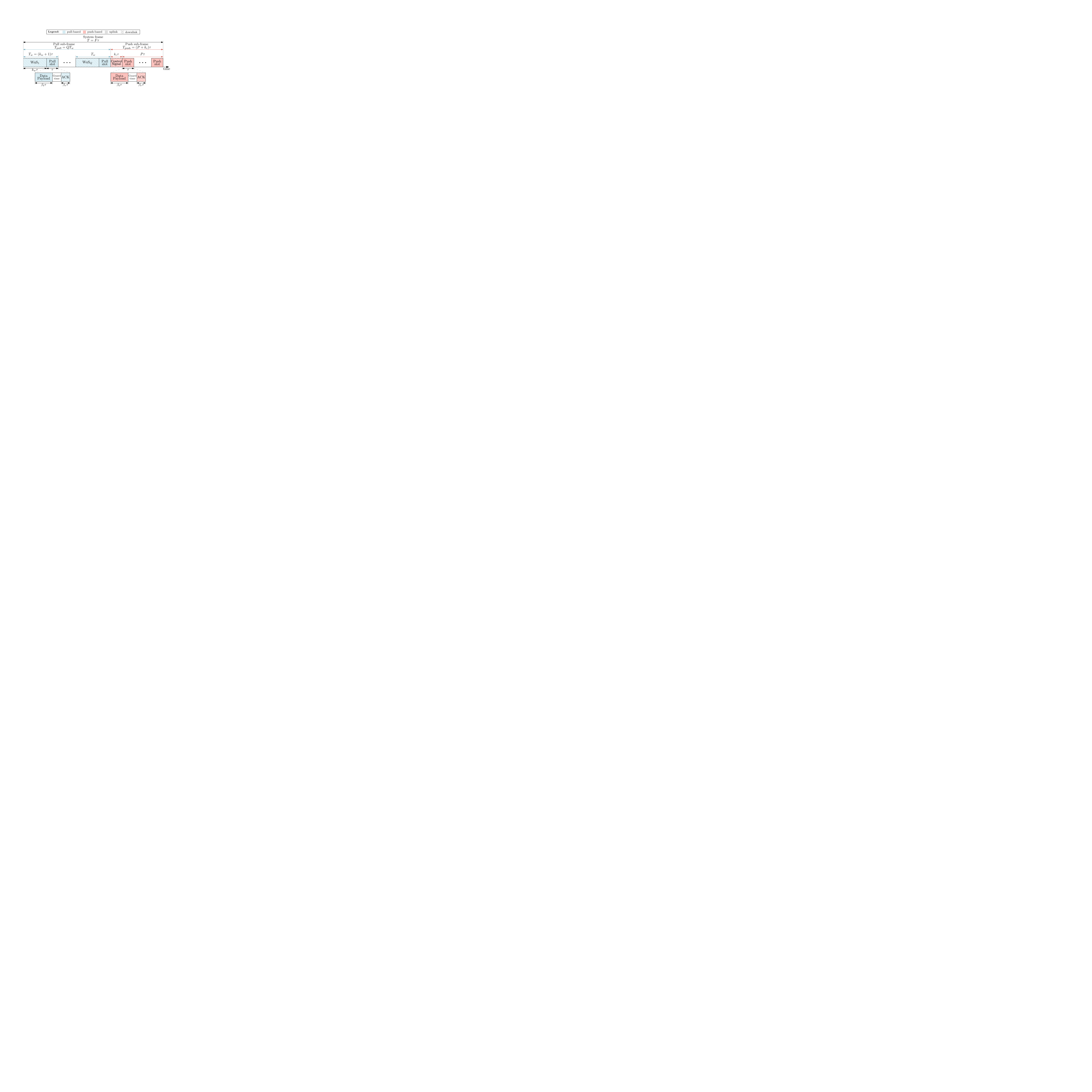}
    \caption{Time frame structure of pull and push-based communication.}
    \label{fig:time-diagram}
\end{center}
\end{figure}

Note that the control signal of the push sub-frame may need to embed a synchronization sequence and the frame structure parameters \revise{to let the nodes acquire frame information by periodically turning on the main radio to decode the control signal}. For simplicity, we further consider the following.
\begin{assumption}[Synchronization]
    % All the nodes periodically turn on the main radio to listen to the control signal of the push sub-frame to acquire synchronization and frame structure information. Therefore,
    All nodes are perfectly synchronized at the frame boundaries and know the frame structure perfectly, i.e., they know $F$, $Q$, $P$, $k_c$, and $\tau$.
    \label{assu:synch}
\end{assumption}
\noindent \revise{As periodic synchronization tasks are present in most classical push-based communication schemes, we do not consider the overhead and the energy of the synchronization process\footnote{Investigating the impact of clock drifts in the sensors, main radio, and wake-up receiver, as well as imperfections in the synchronization process, is out of the scope of this paper.}.}

\subsection{Queries}
\label{sec:model:queries}
The pull-based traffic is controlled by queries sent by the cloud to the \gls{bs}. 
The query includes the information of data collection deadline, by which the \gls{bs} must collect data from the specific device. 
If not served within the specified deadline, the query is discarded. To focus on an \gls{iot} data collection scenario and to simplify our analysis, this work focuses on a strict latency-constrained scenario, where the deadline is set to a single frame duration, i.e., queries requested at the beginning of a frame should be served by the end of the same frame.

Toward this aim, we consider an arrival process having the properties specified by the following assumption. % \fs{[Goal: removing any unnecessary assumption on the distribution. We just consider a distribution having max one query per node and with a mean value $\lambda_q$ per node. Then, we claim it to be Poisson in the results section, when we give the setting scenario. NOT DONE YET, attempting the anomalies first.]}
\begin{assumption}[Query generation]
   \revise{A maximum of one query per node per frame is generated, having an average device's arrival rate of $\lambda_q$ [queries/s], and resulting in a maximum of $N$ queries collected at the \gls{bs} per frame.} 
   %The number of queries collected at the \gls{bs} follows a Poisson distribution\footnote{\revise{Modeling queries as a Poisson process is a common assumption in \gls{3gpp} studies to capture uncoordinated, low-rate arrivals~\cite{3gpp_poisson}.}} with an arrival rate of $\lambda_q$ [queries/s] for each device; a maximum of one query per node per frame can be generated, resulting in a maximum of $N$ queries collected at the \gls{bs} per frame; in each frame, the \glspl{id} of the queries generated in a frame are randomly sampled without replacement from the set $\{1,\dots, N\}$.
    \label{assu:queries}
\end{assumption}
Let us now denote the number of queries at the beginning of the generic $t$-th frame as $n_q(t)$. According to Assumption~\ref{assu:queries}, it is $n_q(t) = \min (\tilde{n}_q(t), N)$, % where $\tilde{n}_q (t)\sim \mathrm{Pois}(\mu_q)$, and $\mu_q = N\, \lambda_q\, T$ [queries/frame] is the average number of queries collected in a frame. Consequently, the \gls{pmf} of $n_q(t)$ is
\revise{where $\tilde{n}_q (t)$ is an unbounded \gls{rv} having mean value $\mu_q = N\, \lambda_q\, T$ [queries/frame] representing the average number of queries collected in a frame.}

% \begin{equation}
% \label{eq:queries-pmf}
%     \mathcal{P}(n_q(t) = q) =
%     \begin{cases}
%        \frac{\mu_q^q  e^{-\mu_q}}{q!} , &\text{if } q < N \\
%         1 - \sum_{q=0}^{N-1}{\frac{\mu_q^q  e^{-\mu_q}}{q!}} , &\text{if } q = N \\
%         0, &\text{otherwise}. \\
%     \end{cases}
% \end{equation}
%
According to Sec.~\ref{sec:model:frame}, only $Q$ \glspl{wus} related to the first $Q$ queries will be transmitted by the \gls{bs}. Therefore, the maximum number of served queries per frame is $\min\{Q, n_q(t)\}$. Nevertheless, the following consideration holds.
\begin{remark} \label{rem:query-on-alarm}
If a query contains the \gls{id} of a device in an alarm state, the device will receive the \gls{wus} and respond with information about the anomaly within the pull sub-frame \revise{due to its urgency}. This response will not satisfy the original query, as the device preempts the data related to the anomaly over the requested information. \revise{Note that, \gls{bs} can distinguish alarm from regular query replies through decoding, e.g., via distinct headers or identifiers.} %Further details are in Sec.~\ref{sec:model:anomalies}.
\end{remark}

\subsection{Anomalies}
\label{sec:model:anomalies}
Upon detecting an anomaly in its monitored process, the device enters in the alarm state -- and thus it becomes an \emph{alarmed node} -- and turns on its main radio to transmit the updates containing the information regarding the alarm state, called the \emph{alarm packet} in the remainder of the paper. The following assumption holds.
%
% \fs{[Goal: remove any need of specifying the distribution. Just the necessary assumptions and the mean value.]}
% \begin{assumption}[Alarm state and packet generation]
%     In each frame, a node may detect an anomaly according to a Bernoulli distribution with probability $\alpha$; a node that has detected an anomaly during frame $t-1$, will enter in the alarm state from the beginning of frame $t$, where it starts attempting data transmission; a device in an alarm state can generate only one alarm packet, limiting the number of alarm packets per device per frame to one; if the access is unsuccessful, the alarm packet is re-transmitted in the following frames -- either in a push or a pull manner, cf. Assumption~\ref{assu:alarm-tx} -- until it is successfully delivered to the \gls{bs}, whose \gls{ack} response terminates the alarm state of the node, as in~\cite{chiariotti2024distributed}. 
%     \label{assu:alarm-state} 
% \end{assumption}
\begin{assumption}[Alarm state and packet generation]
    \revise{In each frame, each node may detect an anomaly according to a common discrete distribution with mean value $\lambda_a$ [packet/s]}; a node detecting an anomaly during frame $t-1$, will enter in the alarm state from frame $t$, where it starts attempting data transmission; a device in an alarm state can generate only one alarm packet, limiting the number of alarm packets per device per frame to one; if the access is unsuccessful, the alarm packet is re-transmitted in the following frames -- either in a push or a pull manner, cf. Assumption~\ref{assu:alarm-tx} -- until it is successfully delivered to the \gls{bs}, whose \gls{ack} response terminates the alarm state of the node, as in~\cite{chiariotti2024distributed}. 
    \label{assu:alarm-state} 
\end{assumption}

According to Assumption~\ref{assu:alarm-state}, the number of nodes in the alarm state -- or, equivalently, the alarm packets in the system -- during frame $t$ is 
\begin{equation}
\label{eq:na-def}
n_a(t) = n_n(t-1) + n_f(t-1) \le N,     
\end{equation}
where $n_f(t-1)$ represents the alarm packets that have failed the transmission within frame $t-1$, and $n_n(t-1)$ are the new alarm packets generated during frame $t-1$.
Moreover, $n_n(t)$ is upper bounded by the number of nodes not in the alarm state, $N - n_a(t)$, $\forall t$. % Therefore, we can model $n_n(t) | n_a(t) \sim \mathrm{B}(N - n_a(t), \alpha)$, with $\mu_a(t) = (N - n_a(t)) \alpha$ being the average number of new alarm packets generated within frame $t$. 
\revise{Therefore, the \gls{pmf} of $n_n(t)$ conditioned on the knowledge of $n_a(t)$--denoted as $n_n(t) | n_a(t)$--has a conditional mean of $\mu_a(t) = (N - n_a(t)) T \lambda_a$ [packet/frame], and depends on the system behavior on previous frames.} 
% This \gls{pmf} is conditioned on the knowledge of $n_a(t)$, which is dependent on the system behavior on previous frames. 
\revise{As an initial condition of the system,} it is reasonable to assume that the number of nodes in the alarm state is $0$ before the first frame, i.e., \revise{$n_a(0) = n_f(0) =0$. Thus, imposing the initial condition, we have $n_a(1) = n_n(0)$, whose \gls{pmf} is considered known, and having mean value $\mu_a(0) = N T \lambda_a$ [packet/frame].}
%Therefore, imposing the initial condition, $n_n(0) = n_a(1) \sim \mathrm{B}(N, \alpha)$. 
For the subsequent frames, an iterative analysis is necessary, as detailed in Sec.~\ref{sec:prob-analysis}.
% We further define $\lambda_a = \frac{\alpha}{T}$ [packet/s] as a metric reflecting the average alarm packets arrival rate for a node under the initial condition, to provide an equivalent metric to $\lambda_q$ for consistent analysis and comparison in the results.

The transmission of alarm packets occurs as follows.
\begin{assumption}[Transmission of alarm packets]
    \label{assu:alarm-tx}
    An alarmed node transmits its alarm packet \revise{either: %either the \gls{bs} initiates the transmission within the pull sub-frame transmitting a \gls{wus} to the node, prompting it to respond with the alarm packet (c.f. Remark~\ref{rem:query-on-alarm}); or it resorts to attempt the transmission within the push sub-frame following a \gls{fsa} approach.
    within the pull sub-frame, if the \gls{bs} transmits a \gls{wus} to the node, which in turn replies preempting the alarm packet to the requested query (c.f. Remark~\ref{rem:query-on-alarm}); or within the push sub-frame, attempting the transmission following a \gls{fsa} approach.}
\end{assumption}
To model the system according to Assumption~\ref{assu:alarm-tx}, we denote the number of alarmed nodes receiving a \gls{wus} within frame $t$ as $n_w(t)$. Due to the error-free pull-based communication assumptions, $n_w(t)$ also represents the number of successfully transmitted alarm packets within the pull sub-frame. Therefore, the number of alarm packets attempting access during the push sub-frame of frame $t$ are
\begin{equation}
    \label{eq:np-def}
    n_p(t) = n_a(t) - n_w(t).
\end{equation}
Finally, we can relate $n_p(t)$ with the alarm packets unsuccessfully transmitted, $n_f(t)$, obtaining the alarm packets successfully transmitted in the push sub-frame of frame $t$ as 
\begin{equation}
    \label{eq:ns-def}
    n_s(t) = n_p(t) - n_f(t).
\end{equation}
The variables relations are illustrated in Fig.~\ref{fig:packets}.

\begin{figure}[!t]
    \centering
    \includegraphics[width=0.999\columnwidth]{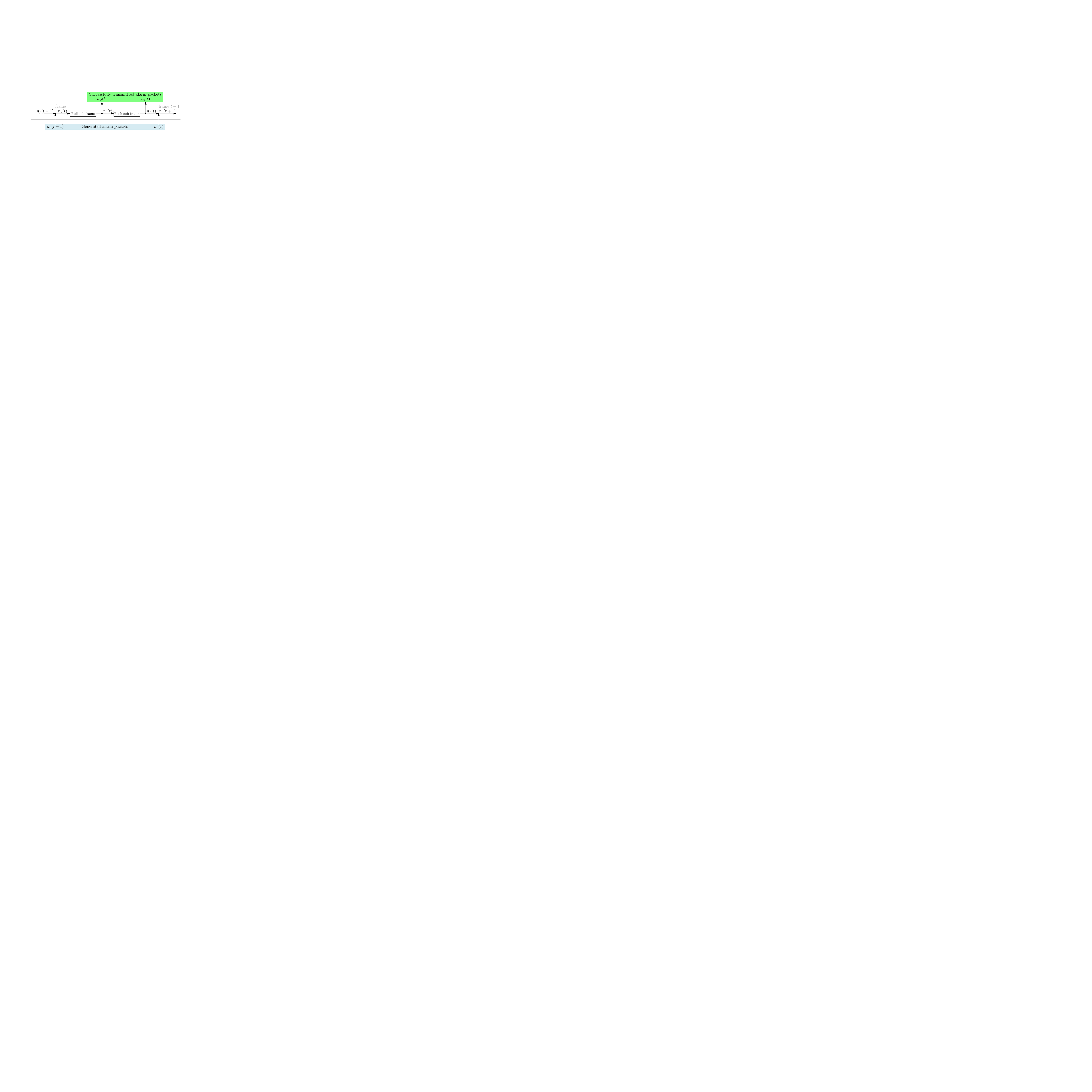}
    \caption{Relationship between generated and successfully transmitted alarms throughout the frame.}
    \label{fig:packets}
\end{figure}

\subsection{Energy consumption}
\label{sec:model:energy}

We assume that \gls{wur} operations are \emph{synchronous} with the \gls{mac} frame, according to Assumption~\ref{assu:synch}. In other words, the nodes can activate/deactivate the \gls{wur} taking advantage of the frame structure, by, e.g., turning off the \gls{wur} in the push sub-frame or when the main radio is activated. 
Hence, to model the energy consumption, we consider the following three terms representing the power consumption of a device: $\xi_w$ [W] is the power spent when the \gls{wur} is turned-on, while $\xi_r$ [W] and $\xi_t$ [W] denote the power consumed when the main radio is activated for data reception and transmission, respectively. The power consumption of the \gls{wur} is several orders of magnitude lower than that of the main radio, i.e., it satisfies $\xi_w \ll \xi_r \le \xi_t$. 
When both the main radio and the \gls{wur} are turned off, only the always-on sensor consumes energy. However, we ignore this quantity, as it is a constant factor independent from the network design.
Accordingly, the energy consumed by a device within a frame depends on the radio components activated, which, in turn, depend on the node's state (alarmed or not) and whether it receives a \gls{wus}. Through Assumptions~\ref{assu:synch},~\ref{assu:alarm-state}, and Remark~\ref{rem:query-on-alarm}, we can enumerate the following three cases, visualized in Fig.~\ref{fig:power-plot}.

\begin{enumerate}[label=Case~\arabic*), start=1, left=0pt]
    \item \emph{Pulled nodes}: The node (regardless its state) receives a \gls{wus}. 
    Hence, the \gls{wur} is active and consumes power $\xi_w$ from the beginning of the frame until the end of the \gls{wus} intended for the node; detecting the \gls{wus} embedding the node's \gls{id}, the \gls{wur} triggers the activation of the main radio to communicate in the subsequent pull slot, consuming $\xi_t$ for payload transmission and $\xi_r$ for \gls{ack} reception (cf. Fig.~\ref{fig:time-diagram}); finally, the radio components are deactivated until the end of the frame.

    \item \emph{Push due to alarm state}: The node is in an alarm state; it does not receive any \gls{wus} within the pull sub-frame. 
    The node's \gls{wur} is active to listen for communication requests for the whole pull sub-frame, consuming power $\xi_w$; in absence of any \gls{wus}, the node attempts data transmission within the push sub-frame while turning off its \gls{wur}. Within the push sub-frame, the main radio is active for the control signal reception, spending $\xi_r$; then, it activates for payload transmission and \gls{ack} reception in the selected push slot, consuming $\xi_t$, and $\xi_r$ respectively; for the other push slots, the main radio is deactivated. 

    \item \emph{Neither alarm or pulled}: the node is not in an alarm state and does not receive a \gls{wus}. Hence, the node activates its \gls{wur} for the pull sub-frame -- consuming $\xi_w$ -- and deactivates it in the push sub-frame.
\end{enumerate}
%The evaluation of the average energy spent by all the devices 
\revise{The network energy consumption} depends on the statistics of the queries and anomalies, analyzed \revise{next}. % in the next section.

\begin{figure}[!t]
    \centering
    \includegraphics[width=.99\columnwidth]{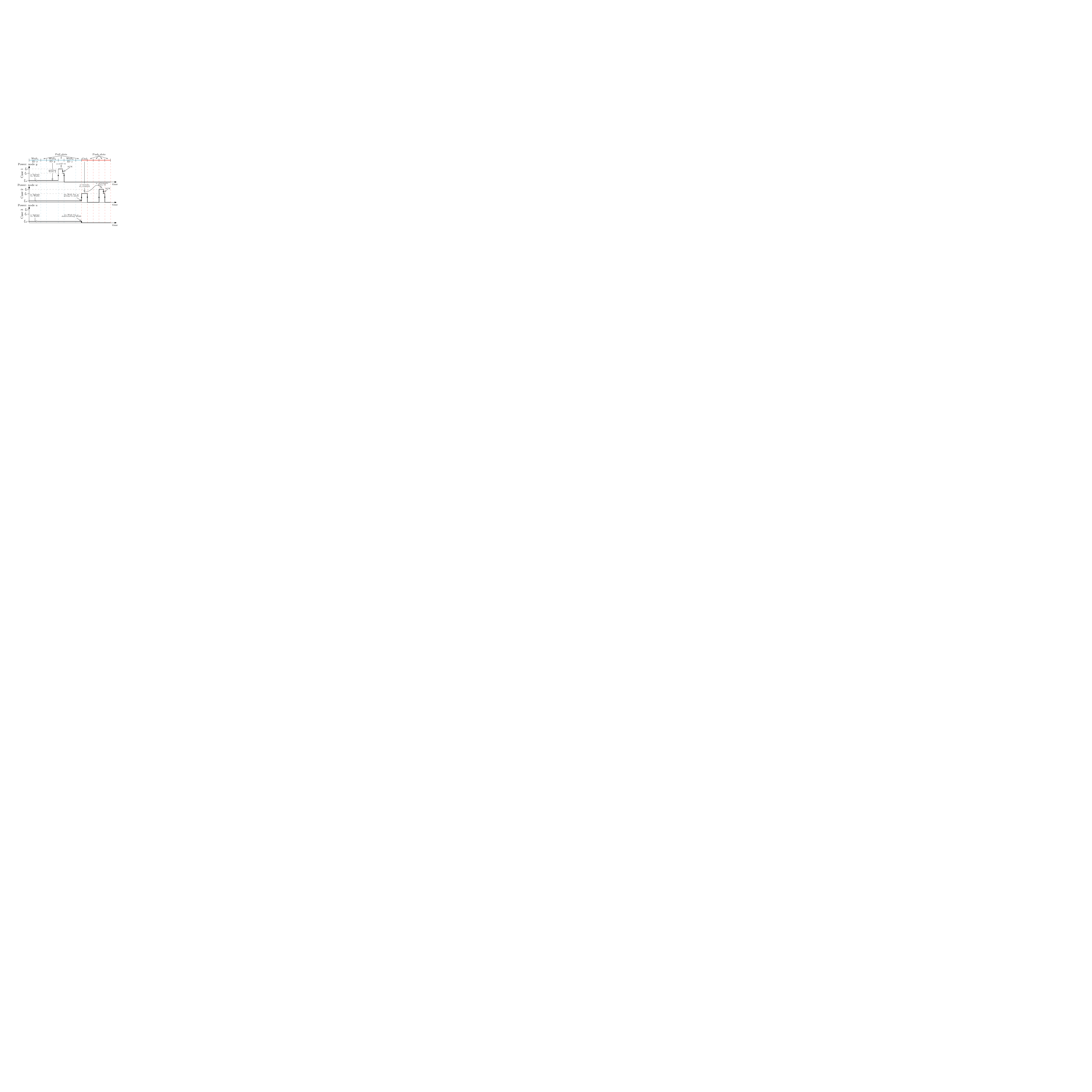}
    \caption{Diagram of the power spent by three nodes ($y$, $w$, $u$) representing the three energy cases analyzed, respectively. $y$ (case 1) receives the 2nd \gls{wus} and replies with a transmission on the subsequent pull slots. $w$ (case 2) is in the alarm state; not receiving any \gls{wus}, it listens to the push control signal and decide to attempt data transmission on the 3rd push slot; $u$ (case 3) turns off its radio components in the push sub-frame, not having received any \gls{wus}.}
    \label{fig:power-plot}
\end{figure}

\section{Performance Analysis}
\label{sec:performance}
In this section, we conduct a mathematical analysis of the system performance in terms of the \emph{success probability} of anomalies and queries (Sec.~\ref{sec:prob-analysis}), and in terms of the \emph{overall energy consumed} by the nodes (Sec.~\ref{sec:energy-analysis}). The aim is linking the performance with the frame structure defined in Sec.~\ref{sec:model:frame}, focusing on the relative durations of pull and push sub-frames, controlled by the parameter $Q$--cf. Remark~\ref{rem:duration}.

\subsection{Performance of reporting anomalies and serving queries}
\label{sec:prob-analysis}
We begin by providing the recursive relations between the alarmed nodes in two subsequent frames, used to derive the success probability of \emph{transmitting the alarm packets}, with frame $t$ representing the starting point of our evaluation. Then, we present the success probability of \emph{satisfying the queries} received by the \gls{bs}. Finally, the previous quantities are translated into a unique metric%accounting for both success probabilities
, denoted as a \emph{success probability trade-off}, which characterizes the success probabilities of alarms and queries as a function of $Q$.

\subsubsection{Recursive analysis of the alarm packets}
\label{sec:prob-analysis:anomalies}
In this section, we find the \gls{pmf} of the number of alarm packets in frame $t+1$, $n_a(t+1)$, as a function of the \gls{pmf} of $n_a(t)$.

Based on the statistics of $n_a(t)$\footnote{The analysis is valid considering the initial condition \revise{$n_a(1) = n_n(0)$}.}, we proceed to analyze the \gls{pmf} of $n_w(t)$, which is influenced by the number of generated alarm packets $n_a(t)$ and the number of queries received $n_q(t)$. Through the law of total probability, we have
\begin{IEEEeqnarray}{rl}  
    \mathcal{P}(n_w(t) = j) =& \sum_{i=0}^N \sum_{q=0}^N \mathcal{P}(n_w(t) = j|n_q(t) = q, n_a(t) = i) \IEEEnonumber\\ 
    & \cdot\, \mathcal{P}(n_q(t) = q)\, \mathcal{P}(n_a(t) = i)
    \label{eq: prob(n_w=j)}
\end{IEEEeqnarray}
where the expression $\mathcal{P}(n_w(t) = j|n_q(t) = q, n_a(t) = i)$ denotes the probability that, given $q$ queries received, exactly $j$ nodes out of $i$ nodes with alarms receive the \gls{wus}, given $N$, and $Q$. 
\revise{The previous conditioned \gls{pmf} depends on the \gls{pmf} of the number of query generated, $n_q(t)$, assumed to be known by the \gls{bs} and tunable at the cloud level. An example of this \gls{pmf} for a simple solution is given in Sec.~\ref{sec:results}.}
To simplify the notation, we denote
\begin{equation} 
\begin{split}
\label{eq:nw|na}
    &\mathcal{P}(n_w(t) = j|n_a(t) = i) = \\
    &=\sum_{q=0}^N {\mathcal{P}(n_w(t) = j|n_q(t) = q, n_a(t) = i)\, \mathcal{P}(n_q(t) = q)}.
\end{split}
\end{equation}

Once the statistic of the alarmed nodes receiving a \gls{wus} is known, we can analyze the remaining number of nodes in the alarm state, $n_p(t)$, which do not receive the \gls{wus} during the pull sub-frame and will attempt to access the channel within the push sub-frame. According to~\eqref{eq:np-def}, 
the \gls{pmf} of $n_p(t)$ can be expressed as
% Combination of the following two
\begin{equation}
    \mathcal{P}(n_p(t) = k) = \sum_{i=k}^{N} \mathcal{P}(n_w(t) = i-k | n_a(t) = i) \, \mathcal{P}(n_a(t) = i).
    \label{eq: prob_np=k}
\end{equation}

Of these $n_p(t)$ alarms trying accessing in the push sub-frame, some are successful and others fail because of collisions. Following~\eqref{eq:ns-def}, we can define the \gls{pmf} of $n_s(t)$ as
\begin{equation}
    \mathcal{P}(n_s(t) = s) = \sum_{k=0}^{N}{\mathcal{P}(n_s(t) = s | n_p(t) = k) \, \mathcal{P}(n_p(t) = k)},
    \label{eq: pmf_ns}
\end{equation}
where the \gls{pmf} of $n_s(t) | n_p(t)$ is given in the following Lemma.

\begin{lemma}
\label{lemma:alarm-success-given-attempts}
The probability of alarm packets successfully transmitted in the push sub-frame given the number of alarms trying to access can be obtained by computing the following marginal \gls{pmf}
\begin{equation}
 \mathcal{P}(n_s(t) = s | n_p(t) = k) \hspace{-1mm} = \hspace{-1mm} \sum_{c=0}^{N} \mathcal{P}(s, c | n_p(t) = k),
    \label{eq: pmf_ns_nf_np}  
\end{equation}
where the joint \gls{pmf} $\mathcal{P}(s, c | n_p(t))$ is given in~\eqref{eq:P(s,c|np)}.
\end{lemma}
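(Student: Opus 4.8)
The plan is to recognise the push sub-frame access as a classical balls-into-bins occupancy problem and to reduce the lemma to marginalising a joint count over its auxiliary variable. Conditioned on $n_p(t)=k$, the $k$ contending alarm packets are $k$ distinguishable balls, the $P$ push slots are $P$ distinguishable bins, and by the Framed ALOHA rule of Sec.~\ref{sec:model:frame} each ball is placed into one of the $P$ bins independently and uniformly, so the sample space has $P^k$ equally likely configurations. Under the collision-without-capture model a slot yields a success iff it holds exactly one ball and a collision iff it holds at least two; hence $s=n_s(t)$ is the number of singleton slots, while the natural auxiliary variable $c$ is the number of collision slots, among which the $k-s$ failed packets are distributed. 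The stated identity $\mathcal{P}(n_s(t)=s\mid n_p(t)=k)=\sum_{c}\mathcal{P}(s,c\mid n_p(t)=k)$ is then the law of total probability over the mutually exclusive events indexed by $c$; the real content is establishing the joint $\mathcal{P}(s,c\mid k)$ referenced in~\eqref{eq:P(s,c|np)}.

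To derive that joint \gls{pmf} I would count the configurations compatible with $(s,c)$ and divide by $P^k$. First choose which slots play which role: $\binom{P}{s}$ ways for the $s$ success slots and $\binom{P-s}{c}$ for the $c$ collision slots, with the remaining $P-s-c$ slots empty. Then assign the balls: select and order $s$ distinct balls into the $s$ success slots, contributing $\frac{k!}{(k-s)!}$ ways, and distribute the remaining $k-s$ balls into the $c$ collision slots with each such slot receiving at least two balls. Collecting the factors gives $\mathcal{P}(s,c\mid k)=P^{-k}\binom{P}{s}\binom{P-s}{c}\frac{k!}{(k-s)!}\,D(k-s,c)$, where $D(m,c)$ is the number of placements of $m$ labelled balls into $c$ labelled bins with every bin of size at least two.

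The main obstacle is this last factor $D(m,c)$: unlike an ordinary surjection count it must exclude both empty and singleton collision slots, so I would obtain it by inclusion–exclusion over the subset of collision slots violating the at-least-two constraint (equivalently from the exponential generating function $(e^x-1-x)^c$), producing an alternating sum that is closely related to the associated Stirling numbers of the second kind. I would also record the feasibility window that keeps the summand nonzero, namely $s+c\le P$ and $k-s\ge 2c$, i.e. $0\le c\le\min\{P-s,\lfloor (k-s)/2\rfloor\}$ with $c=0$ forced exactly when $s=k$, so that the sum in~\eqref{eq: pmf_ns_nf_np}, although written up to $N$, has only finitely many nonzero terms and is well defined. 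Once the joint \gls{pmf} and its support are in hand, summing over $c$ immediately yields the claimed marginal; I would close with the sanity checks $\sum_{s}\mathcal{P}(n_s(t)=s\mid n_p(t)=k)=1$ and the degenerate cases $k\in\{0,1\}$.
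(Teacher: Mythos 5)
Your proposal is correct, but it reaches the joint \gls{pmf} $\mathcal{P}(s,c\mid n_p(t))$ by a genuinely different route than the paper. The paper builds $\mathcal{P}(s,c\mid n_p(t))$ \emph{recursively} in the number of contending nodes: starting from $\mathcal{P}(0,0\mid 0)=1$, it adds one alarmed node at a time and conditions on whether that node lands in an empty slot (turning $s-1$ into $s$), in an existing collision slot (leaving $(s,c)$ unchanged), or in a previously singleton slot (turning $(s+1,c-1)$ into $(s,c)$), which yields the three-term recursion in~\eqref{eq:P(s,c|np)}; the marginalization over $c$ is then the same law-of-total-probability step you use. You instead compute the joint \gls{pmf} in closed form as an occupancy count, $\mathcal{P}(s,c\mid k)=P^{-k}\binom{P}{s}\binom{P-s}{c}\frac{k!}{(k-s)!}D(k-s,c)$ with $D(m,c)=m!\,[x^m](e^x-1-x)^c$ obtained by inclusion--exclusion (associated Stirling numbers). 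Both are valid and agree (e.g., for $P=k=2$ both give $\mathcal{P}(2,0\mid 2)=\mathcal{P}(0,1\mid 2)=1/2$). The paper's recursion avoids the heavier combinatorial machinery and is directly suited to the iterative numerical evaluation the rest of the analysis performs, while your closed form makes the support constraints ($s+c\le P$, $k-s\ge 2c$, and $c=0$ iff $s=k$) and the finiteness of the sum in~\eqref{eq: pmf_ns_nf_np} fully explicit, which the paper leaves implicit. To make your argument complete you would still need to write out the alternating sum for $D(m,c)$, but the plan as stated has no gap.
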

\begin{proof}
    Please, refer to Appendix~\ref{sec:proof-alarm-success}.
\end{proof}

Moreover, the \gls{pmf} of the number of alarm packets failing the access in the push sub-frame, $n_f(t)$, can be obtained by using~\eqref{eq:ns-def} and applying the total law of probability, obtaining
% Compressing the next two equations
\begin{equation}
    \mathcal{P}(n_f(t) = f) \hspace{-1mm} = \hspace{-1.8mm}\sum_{k=f}^{N} \mathcal{P}(n_s(t) = k - f | n_p(t) = k) \mathcal{P}(n_p(t) = k).
   \label{eq: pmf_nf}
\end{equation}

Using~\eqref{eq:na-def}, we obtain the \gls{pmf} of $n_a(t+1)|n_a(t)$ as
\begin{equation}
\begin{split}
    &\mathcal{P}(n_a(t+1) = a | n_a(t) = i) = \\
    &=\hspace{-1.8mm} \sum_{f=f_{\min}}^{a}\hspace{-1.5mm} \mathcal{P}(n_n(t) = a - f | n_a(t) = i)\mathcal{P}(n_f(t) = f | n_a(t) = i),
    \label{eq: pmf_ns2}
    \end{split}
\end{equation}
where $f_{\min} = \max\{0, a + i - N\}$ obtained from the conditions $0 \le a - f \le N - i$ and $f \ge 0$.
Employing the law of total probability on eq.~\eqref{eq: pmf_ns2}, we can finally write the \gls{pmf} of $n_a(t+1)$ as a function of all the previous \glspl{pmf}, as shown in eq.~\eqref{eq:na(t+1)} at the top of the next page. Remark that eq.~\eqref{eq:na(t+1)} is valid regardless of the \gls{pmf} of $n_n(t)|n_a(t)$.

\begin{figure*}[hbt]
    \centering
    \begin{equation}
    \begin{split}
        \mathcal{P}(n_a(t+1) = a) =& \sum_{i=0}^N \sum_{f=f_{\min}}^{a}   \mathcal{P}(n_n(t) = a - f | n_a(t) = i) \\
        &\sum_{k=f}^N  \mathcal{P}(n_s(t) = k - f | n_p(t) = k) \mathcal{P}(n_p(t) = i - k  | n_a(t) = i) \, \mathcal{P}(n_a(t) = i),
        \end{split}
        \label{eq:na(t+1)}
    \end{equation}    
    \noindent\rule{\textwidth}{0.4pt}
\end{figure*}

To summarize, we have obtained the \gls{pmf} of the alarm packets in frame $t+1$ as a function of the \gls{pmf} of the alarm packets in frame $t$. Imposing the initial condition $n_a(0) = 0$, eq.~\eqref{eq:na(t+1)} can be used to obtain the \glspl{pmf} of the alarmed node in the system for all the frames $t\in\{1, 2, \dots\}$.

\subsubsection{Success probability of alarm packets}
\revise{We can compute the alarm success probability at the frame $t$, denoted as $p_a(t)$.
According to Assumption~\ref{assu:alarm-tx}, an alarm packet is successfully delivered if it is transmitted either: 1) within the pull sub-frame, obtaining a \emph{pull success}, or 2) while accessing the push sub-frame, obtaining a \emph{push success}. The absence of alarm packets, i.e., $n_a(t) = 0$,  is also considered a success, because no anomaly has to be reported in the frame.}
\revise{Denoting the probability of a pull success in frame $t$ as $p_w(t)$, and the probability of a push success in frame $t$ as $p_p(t)$, we obtain}
\begin{equation} \label{eq:pa(t)}
    \revise{p_a (t) = \mc{P}(n_a(t) = 0) + p_w (t) + p_p (t)},
\end{equation}
\revise{where the events represented by the three terms are disjoint.}

%\paragraph*{Pull success}
\revise{Assuming $n_a(t) \ge 1$, a pull success occurs when a node in the alarm state receives a \gls{wus}.}
Assuming to know $n_w(t)$ and $n_a(t)$, the pull success probability is the ratio $n_w(t) / n_a(t)$. Thus, $p_w(t)$ is the previous ratio multiplied by the probability of having exactly $n_w(t)$ out of $n_a(t)$ alarmed nodes receiving a \gls{wus} in frame $t$, i.e.,
\begin{equation} \label{eq:pw(t)}
    p_w(t) = 
        \revise{\sum_{i=1}^N  \sum_{w=1}^{i} {\frac{w}{i}
    \mathcal{P}(n_w(t) = w|n_a(t) = i)\, \mathcal{P}(n_a(t) = i)},}
\end{equation}
\revise{which also depends on $n_q(t)$--see eq.~\eqref{eq:nw|na}.}

%\paragraph*{Push success}
%\fs{Similarly, a push success depends on the number of alarms nodes $n_a(t)$ and the number of those receiving a \gls{wus} $n_w(t)$. 
\revise{From the alarmed node's point of view, a push success occurs whenever a pull success did not occur, i.e., with probability $1 - n_w(t) /n_a(t)$, and no collision occurs on the $P$ available resources.}
\revise{In other words,
\begin{equation}\label{eq:pp(t)}
\begin{aligned}
    p_p(t) &= \sum_{i=1}^N \sum_{w=0}^{i-1} \left(1 - \frac{w}{i}\right) p_\mathrm{nc}(i-w) \\
    &\qquad\cdot\mc{P}(n_w(t) = w|n_a(t) = i) \mc{P}(n_a(t) = i),
\end{aligned}
\end{equation}
where $p_\mathrm{nc}(k)$ denotes the probability of no collision of a packet given a total of $k>0$ packet attempting access.
Eq.~\eqref{eq:pp(t)} is a general formulation that does not depend on the specific form of $p_\mathrm{nc}(k)$.}
Since our push sub-frame employs a \gls{fsa} protocol, \revise{$p_\mathrm{nc}(k)$ is}~\cite{israel_math}
\begin{equation} 
    p_\mathrm{nc}(k) =
    \begin{cases}
    0, 
    &\text{if } P=0 \lor P = 1 \land k > 1, \\
    \left( 1 - \frac{1}{P} \right)^{k - 1}, 
    &\text{if } P > 1 \land k \ge 1,    \\
    1, &\text{if } P = 1 \land k=1. 
\end{cases}
    \label{eq:ps_p}
\end{equation}
Through eq.~\eqref{eq:pa(t)}, it is possible to compute the success probability of alarm packets in any frame, knowing all the \glspl{pmf}. We further compute an average success probability of alarms, $\bar{p}_a$, to provide a simple performance metric evaluated on a reference period of observation $T_O$ (in frames), i.e.,
\begin{equation}
    \bar{p}_a = \frac{1}{T_O}\sum_{t=0}^{T_O-1} p_a(t).
\end{equation}

\subsubsection{Success probability of serving queries}
Here, we evaluate the success probability of serving queries, denoted as $p_q(t)$. Following Remark~\ref{rem:query-on-alarm}, a query is successfully served when a device that is not in the alarm state receives a \gls{wus}. Accordingly, we obtain

\begin{equation}
   \begin{split}
p_q(t) &= \sum_{i = 0}^N \sum_{j = 0}^i \sum_{q = 1}^N \frac{X - j}{q} \mathcal{P} (n_w(t) = j|n_q(t) = q, n_a(t) = i) \\
&\cdot \mathcal{P} (n_q(t) = q) \mathcal{P} (n_a(t) = i),
\end{split}
\end{equation}
where $X = \min\{q, Q\}$. Note that $\frac{X-j}{q}$ represents the ratio between the number of successfully served queries and the total number of received queries--assuming that at least one query is received, otherwise the probability of success is zero. This ratio is multiplied by the probability of generating $q$ queries, the probability of having $j$ alarm nodes receiving a \gls{wus}, and the one of having $i$ alarmed nodes in the system.

% Similar to $\bar{p}_a$, 
We further define the mean success probability of serving the queries averaging $p_q(t)$ over an observation period $T_O$ as
\begin{equation}
    \bar{p}_q = \frac{1}{T_O}\sum_{t=0}^{T_O-1} p_q(t).
\end{equation}

\subsubsection{Success probability trade-off}
Instead of jointly maximizing both alarms and queries success probabilities, we characterize a performance trade-off defining the weighted average success probability for the push/pull coexistence as
\begin{equation}
    \bar{p}_{s} = w_q\, \bar{p}_q + w_a \, \bar{p}_a,
\end{equation}
where $w_q~\in~[0, 1]$ and $w_{a}\in~[0, 1]$ are positive weights useful to target requirements of both communication modes, respectively. 
Since both $\bar{p}_q$ and $\bar{p}_a$ are functions of $Q$, it is possible to find the value of $Q$ that maximizes $\bar{p}_{s}$, under a specific set of weights and system settings.
As a show case, we select $w_q=\frac{\lambda_q}{\lambda_q + \lambda_a}$ and $w_a=\frac{\lambda_a}{\lambda_q + \lambda_a}$, considering fairness in terms of traffic load for both alarm packets and queries.

\subsection{Energy analysis}
\label{sec:energy-analysis}
In this section, we conduct an analysis of the average energy consumed by the system according to the three different cases defined in Sec.~\ref{sec:model:energy}. 

\subsubsection{Pulled nodes}
The average energy consumed in the frame $t$ by nodes (either alarmed or not) receiving a \gls{wus} and transmitting during the pull sub-frame is 
\begin{equation}
   \begin{split}
    E_1(t) &= \hspace{-1mm} \sum_{u = 1}^{\min(N, Q)} \hspace{-1mm} \mathcal{P}(n_u(t) = u) \, \tau \sum_{i = 1}^u \big[ \xi_w (i k_w + i-1)  \\
    &\qquad +  \xi_t \beta_t + \xi_r \beta_r \big],
    \end{split}
    \label{eq: E1}
\end{equation}
where $\beta_t$ and $\beta_r$ are the fractions of the slot duration dedicated to data transmission and \gls{ack} reception, respectively (cf. Sec.~\ref{sec:model:frame}), and $\mathcal{P} (n_u(t) = u)$ is the \gls{pmf} of having $u$ nodes, alarmed or not, receiving the \gls{wus} during the pull sub-frame. By definition, $n_u(t) = \min (Q, n_q(t))$, being $Q$ the maximum number of queries that can be served in a frame. Therefore: 
\begin{equation}
    \mathcal{P}(n_u(t) = u) = 
    \begin{cases}
        \mathcal{P}(n_q(t) = u), &\text{if } u < Q, \\
        \sum_{q=Q}^N \mathcal{P}(n_q(t) = q), &\text{if } u = Q.        
    \end{cases}
    \label{eq:prob_u_WuS}
\end{equation}

Specifically, $E_1(t)$ accounts for all possible combinations where the generic $i$-th node keeps its \gls{wur} powered on and its main radio off until it receives the $i$-th \gls{wus}. Afterward, it turns on its main radio to transmit its data packet and, upon receiving the \gls{ack}, switches off its radio components.

\subsubsection{Push due to alarm state}
The average energy consumed in the frame $t$ by the alarmed nodes attempting data transmission within the push sub-frame is 
\begin{equation}
   \begin{split}
E_2(t) &= \sum_{k = 1}^N \mathcal{P} (n_p(t) = k) \, \tau k \, [ \xi_w Q (k_w + 1) + \xi_r k_c + \\
&\qquad + \xi_t \beta_t + \xi_r \beta_r],
\end{split}
\label{eq: E2}
\end{equation}
where $\mathcal{P}(n_p(t) = k)$ is given in~\eqref{eq: prob_np=k}. This energy consumption corresponds to the nodes that are alarmed, keeping their \glspl{wur} in reception mode throughout the pull sub-frame. After receiving the push control signal, these nodes attempt data transmission in one randomly chosen slot of the push sub-frame, while they are deactivated for the remaining $P - 1$ slots.

\subsubsection{Neither alarmed or pulled}
The energy consumed in frame $t$ by the nodes that are neither alarmed nor receive the \gls{wus} is 
\begin{equation}   
        E_3(t) = \sum_{y = 1}^N \mathcal{P} (n_y(t) = y)\, \tau  y \, [ \xi_w Q (k_w + 1)],
    \label{eq: E3}
\end{equation}
where $n_y(t)$ is the number of nodes that are not alarmed and do not receive any \gls{wus} during the pull sub-frame. These nodes keeps the \gls{wur} turned on during the whole pull sub-frame and then deactivate all the radio components. The \gls{pmf} of $n_y(t)$ is given in the following Lemma.

\begin{lemma}
    \label{lemma:pmf_ny}
    The \gls{pmf} of the number of not alarmed nodes not receiving any \gls{wus} is given in eq.~\eqref{eq: prob_ny}, at the top of the next page, where $n_z(t)$ is defined as the number of non-alarmed nodes receiving a \gls{wus}. The terms in~\eqref{eq: prob_ny} are given by~\eqref{eq:queries-pmf},~\eqref{eq: prob(n_w=j)},~\eqref{eq:na(t+1)},~\eqref{eq:prob_z_unalarmed_WuS}, and~\eqref{eq: prob(n_y=y|nq, na)}.
\end{lemma}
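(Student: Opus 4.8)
The plan is to compute the \gls{pmf} of $n_y(t)$ by first re-expressing it through the complementary count of non-alarmed nodes that \emph{do} receive a \gls{wus}, and then marginalizing over the queries and the alarm state via the law of total probability, mirroring the derivation of $\mathcal{P}(n_w(t)=j)$ in~\eqref{eq: prob(n_w=j)}. First I would introduce $n_z(t)$, the number of non-alarmed nodes receiving a \gls{wus}, and record the deterministic accounting identity that, among the $N - n_a(t)$ non-alarmed nodes, those not receiving a \gls{wus} are precisely the complement of those that do, so $n_y(t) = N - n_a(t) - n_z(t)$. Since the total number of \glspl{wus} actually transmitted is $n_u(t)=\min\{n_q(t), Q\}$ and these split between alarmed and non-alarmed recipients, conditioning on $n_q(t)=q$ and $n_a(t)=i$ fixes $X=\min\{q, Q\}$ \glspl{wus} to be distributed among the $N$ nodes.

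Second I would derive the conditional \gls{pmf} $\mathcal{P}(n_y(t)=y \mid n_q(t)=q, n_a(t)=i)$, which is eq.~\eqref{eq: prob(n_y=y|nq, na)}, by the same combinatorial argument used for~\eqref{eq: prob(n_w=j|nq,na)}. Because the served query \glspl{id} are sampled uniformly without replacement (Assumption~\ref{assu:queries}), and taking the first $Q$ of a uniform unordered sample preserves uniformity by exchangeability, the $X$ recipients form a uniform random $X$-subset of the $N$ nodes. Hence the number of non-alarmed recipients is hypergeometric, $\mathcal{P}(n_z(t)=z \mid q, i) = \binom{N-i}{z}\binom{i}{X-z}/\binom{N}{X}$, i.e.\ eq.~\eqref{eq:prob_z_unalarmed_WuS}, and substituting $z = N - i - y$ yields the conditional law of $n_y(t)$. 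Care must be taken with the feasibility constraints on the binomial coefficients and with the degenerate regimes $Q \ge N$ (all nodes pulled, so $n_z(t)=N-i$ and $n_y(t)=0$) and $q \le Q$ versus $q > Q$, exactly as in the four cases enumerated in~\eqref{eq: prob(n_w=j|nq,na)}.

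Finally I would apply the law of total probability, summing the conditional \gls{pmf} against $\mathcal{P}(n_q(t)=q)$ from~\eqref{eq:queries-pmf} and $\mathcal{P}(n_a(t)=i)$ obtained recursively from~\eqref{eq:na(t+1)}, to arrive at the marginal $\mathcal{P}(n_y(t)=y)$ reported in~\eqref{eq: prob_ny}. The main obstacle I anticipate is bookkeeping rather than conceptual: getting the summation limits and the support of each binomial coefficient consistent across all edge cases — ensuring $X - z \le i$ and $z \le N - i$, and correctly handling $q = 0$ (no \gls{wus} sent, so $n_y(t)=N-i$ deterministically) — so that the marginalization cleanly composes the hypergeometric recipient selection with the query and alarm statistics.
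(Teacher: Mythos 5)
Your proof is correct and follows essentially the same route as the paper: reduce $n_y(t)$ to $n_z(t)$ via the accounting identity $N = n_z(t)+n_y(t)+n_a(t)$, identify the distribution of the non-alarmed \gls{wus} recipients, and marginalize over $n_q(t)$ and $n_a(t)$ by the law of total probability. The only cosmetic difference is that you write the hypergeometric law of $n_z(t)$ given $(n_q(t), n_a(t))$ directly, whereas the paper reaches the same quantity by chaining two degenerate conditional \glspl{pmf} — $n_z(t)=\min(Q,n_q(t))-n_w(t)$ in \eqref{eq:prob_z_unalarmed_WuS} and $n_y(t)=N-n_z(t)-n_a(t)$ in \eqref{eq: prob(n_y=y|nq, na)} — through the already-derived hypergeometric \gls{pmf} of $n_w(t)$ in \eqref{eq: prob(n_w=j|nq,na)}.
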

\begin{proof}
    Please, refer to Appendix~\ref{sec:proof-pmf_ny}.
\end{proof}

\begin{figure*}[htb]
    \centering
    \begin{equation}
       \begin{split}
        \mathcal{P} (n_y(t) = y) &=  \sum_{q = 0}^{N} \sum_{j = 0}^{N} \sum_{i = 0}^{N} \sum_{z=0}^N \mathcal{P} (n_y(t) = y| n_z(t) = z, n_a(t) = i) \, \mathcal{P} (n_z(t) = z| n_w(t) = j, n_q(t) = q) \\ &\qquad\cdot \mathcal{P} (n_w(t) = j | n_q(t) = q, n_a(t) = i) \mathcal{P} (n_a(t) = i) \mathcal{P} (n_q(t) = q).
    \end{split}
    \label{eq: prob_ny}
    \end{equation}
\noindent\rule{\textwidth}{0.4pt}
\end{figure*}

\subsubsection{Average energy consumed by the system}
 The system's energy consumption under different conditions can be computed recursively through eqs.~\eqref{eq: E1},~\eqref{eq: E2}, and~\eqref{eq: E3}. 
The average energy consumption, $\bar{E}$, is evaluated considering a reference observation period $T_O$ (in frames), i.e.,
\begin{equation}
    \bar{E} = \frac{1}{T_O}\sum_{t=0}^{T_O-1} E_1(t) + E_2(t) + E_3(t).
    \label{eq: E_wur}
\end{equation}

\section{Numerical Results}
\label{sec:results}
In this section, we numerically evaluate the performance of the proposed system. \revise{A custom simulator, implementing both pull and push communication modes and accounting for the simplifying assumptions of the analytical model, has been developed.}
%According to the system model in Sec.~\ref{sec:system-model}, we build a custom simulator that implements both pull and push communication modes and accounts for the simplifying assumptions of the analytical model. 
In the simulator we set the power consumption of \gls{wur} to 1 [mW], which should be considered as a worst-case, as the actual power consumption of \gls{wur} is the order of several micro watts~\cite {piyare2017ultra}. The presented results are obtained by averaging over $10^5$ Monte Carlo simulation instances. Table~\ref{tab:system_parameter_settings_pull-push2} reports the parameter values used for the numerical results, obtained for a slot comprising 7 \gls{ofdm} symbols using 5G numerology 1, i.e., a subcarrier spacing of $30$ kHz. The chosen parameters lead to a system in which $Q\in[0, 8]$, where $Q = 0$ represents a frame comprising a push sub-frame only, while with $Q=8$ only the pull sub-frame is considered.

\paragraph*{Query modeling}
\revise{The (unbounded) number of queries collected at the \gls{bs}, $\tilde{n}_q(t)$, follows a Poisson distribution\footnote{\revise{Modeling queries as a Poisson process is a common assumption in \gls{3gpp} studies to capture uncoordinated, low-rate arrivals~\cite{3gpp_poisson}.}}. Thus, the \gls{pmf} of $n_q(t) = \min(\tilde{n}_q(t), N)$ is
\begin{equation}
\label{eq:queries-pmf}
    \mathcal{P}(n_q(t) = q) =
    \begin{cases}
       \frac{\mu_q^q  e^{-\mu_q}}{q!} , &\text{if } q < N \\
        1 - \sum_{q=0}^{N-1}{\frac{\mu_q^q  e^{-\mu_q}}{q!}} , &\text{if } q = N \\
        0, &\text{otherwise}. \\
    \end{cases}
\end{equation}
with $\mu_q = N \lambda_q T$. Moreover, the \glspl{id} of the queries generated in a frame are randomly sampled without replacement from the set $\{1,\dots, N\}$.
Therefore, the conditional \gls{pmf} in~\eqref{eq: prob(n_w=j)} yields:}
\begin{equation}
\begin{split}
    &\mathcal{P}(n_w(t) = j|n_q(t) = q, n_a(t) = i) = \\
    &= \hspace{-1mm}
    \begin{cases}    
         \frac{\binom{i}{j} \, \binom{N-i}{X-j}}{\binom{N}{X}},  &\text{if } q > j \land \, j \leq Q < N \land i \geq j, \\         
         1, &\text{if } Q \geq N \land i = j, \\
         0, &\text{otherwise},
    \end{cases}
    \end{split}
    \label{eq: prob(n_w=j|nq,na)}
\end{equation}
\revise{where $X = \min(q, Q)$. % to consider whenever the number queries is lower than the number of available pull slots, i.e., $Q > q$.
Eq.~\eqref{eq: prob(n_w=j|nq,na)} is obtained considering that when $N > Q \ge j$, $q > j$ and $i \geq j$, the denominator accounts for all possible combinations of selecting $X$ nodes out of the $N$ available ones. Meanwhile, the numerator represents the combinations where $j$ alarmed nodes receive a \gls{wus}, while the remaining $X - j$ nodes do not.
Conversely, if $Q \geq N$, all the nodes receive a \gls{wus}, and so $\mathcal{P}(n_w(t) = j|n_q(t) = q, n_a(t) = i) = 1$ if and only if $i = j$. Remark that these assumptions leads to worst-case performance that can be improved by more intelligent query schedulers, e.g.,~\cite{chiariotti2026combined}.} 

\paragraph*{Anomaly modeling}
\revise{We consider both uncorrelated and correlated anomaly generation across devices.}

\revise{For the uncorrelated case, in each frame, every device may independently detect an anomaly with probability $\lambda_a T$.
Thus, the number of newly generated alarm packets in frame $t$, conditioned on the number of alarmed nodes $n_a(t)$, follows a Binomial distribution, i.e., $n_n(t)\mid n_a(t) \sim \mathrm{B}(N-n_a(t),\lambda_a T)$, having mean $\mu_a(t)$--cf. Sec. \ref{sec:model:anomalies}.
Imposing the initial condition yields $n_n(0) = n_a(1)\sim\mathrm{B}(N,\lambda_a T)$.}

\revise{To capture correlations of anomalies due to shared external events or operating conditions, we also} \revise{adopt a Beta--Binomial formulation.
In each frame, all devices share a common random alarm activation probability} \revise{$p_c \sim \mathrm{Beta}(k \lambda_a  T, k (1 - \lambda_a T)$, so that $\mathbb{E}[p_c]=\lambda_a T$.} 
\revise{The concentration parameter is set to $k = $ 0.5, allowing substantial frame-to-frame variability in $p_c$ and thereby inducing correlated, bursty alarm behavior. Conditioned on $p_c$, alarm events are independent across devices, and the number of newly generated alarms in frame $t$ satisfies $n_n(t)\mid n_a(t),p_c \sim \mathrm{B}(N-n_a(t),p_c)$.
Marginalizing over $p_c$ yields a Beta--Binomial distribution for $n_n(t)$ with support $0\le n_n(t)\le N-n_a(t)$, capturing positive correlation among alarm events.}

\begin{table}[t!]
\centering
\footnotesize
\def\arraystretch{1.2}
\caption{Simulation parameters.}
\vspace{-5pt}
\label{tab:system_parameter_settings_pull-push2}
\begin{tabular}{|l|l|l|l|}
\hline
\textbf{Param.} & \textbf{Value} & \textbf{Param.} & \textbf{Value} \\\hline
$N$ & 40 &  $k_c$, $k_w$ & 1, 4~\cite[Table 7.1.2.2-5]{3gpp:38-869}\\
$\tau$ & 0.25 ms & $\xi_w$, $\xi_t$, $\xi_r$  & 1, 55, 50~mW~\cite{tamura2019low_all}\\
$F$ & 41 & $\beta_t$, $\beta_r$ & 4/7, 3/7 \\
$T$  &  10.25 ms & $T_O$ & 10 \\
\hline
% $k_c$, $k_w$ & 1, 4~\cite[Table 7.1.2.2-5]{3gpp:38-869}\\ 
% $k_c$ & 1 \\
% $k_c$ & 1 \\ 
%$k_a$ & 1 \\ 
% $\xi_w$, $\xi_t$, $\xi_r$  & 1, 55, 50~mW~\cite{tamura2019low_all}\\
% $\beta_t$, $\beta_r$ & 4/7, 3/7 \\
% $T_O$ & 10 \\ 
\end{tabular}
\end{table}

\subsection{Analytical model validation}
% \begin{figure}[h!]
% \centering
%     \subfloat[Average success probabilities. \label{fig:analytical_alarm_query}]{\input{tikz/pq-pa__vs_lambda_q_deadline}}
%     \\
%     \subfloat[Average energy consumption.\label{fig:E_vs_lambda}]{\input{tikz/E_vs_lambda}}
%    %\includegraphics[width=0.8\columnwidth]{figures/pq-pa__vs_lambda_q_deadline.png}
%    %\input{tikz/pq-pa__vs_lambda_q_deadline}
%     \caption{Comparison of analytical derivation (lines) and numerical simulations (markers) of average success probability and energy consumption metrics  vs. average incoming anomalies $\lambda_a$ and queries $\lambda_q$ for different $Q$ values.}
%     \label{fig:analytical}
% \end{figure}

\begin{figure}[t]
\centering
\vspace{-2mm}
    \subfloat{\begin{tikzpicture}
\begin{axis}[
    width=0cm,
    height=0cm,
    axis line style={draw=none},
    tick style={draw=none},
    at={(0,0)},
    scale only axis,
    xmin=0,
    xmax=1,
    xtick={},
    ymin=0,
    ymax=1,
    ytick={},
    legend cell align={left},
    legend style={at={(0.5, 0)}, anchor=center, draw=none, fill=none, /tikz/every even column/.append style={column sep=5pt}},
    legend columns = -1,
    ]

\addlegendimage{darkblue, only marks, mark=*, mark options={fill=darkblue}, mark size=2}
\addlegendentry{$Q = 1$}
\addlegendimage{gray, only marks, mark=square*, mark options={fill=gray}, mark size=2}
\addlegendentry{$Q = 4$}
\addlegendimage{lightskyblue, only marks, mark=triangle*, mark options={fill=lightskyblue}, mark size=2,}
\addlegendentry{$Q = 7$}
\addlegendimage{black, semithick}
\addlegendentry{$\bar{p}_a$}
\addlegendimage{black, dashed, semithick}
\addlegendentry{$\bar{p}_q$}

\end{axis}
\end{tikzpicture}}\\
    \vspace{-5mm}
\setcounter{subfigure}{0}
    \subfloat[Avg. success prob., uncorrelated \label{fig:analytical_alarm_query}]{
        \centering
        % This file was created with tikzplotlib v0.10.1.
\begin{tikzpicture}

\begin{axis}[
scale only axis,
width=\four,
height=\four,
xlabel={\(\displaystyle \lambda_a = \lambda_q\) [packets/s]},
xmajorgrids,
xmin=8, xmax=52,
ylabel={\(\displaystyle \bar{p}_a\), \(\displaystyle \bar{p}_q\)},
ymajorgrids,
ymin=0, ymax=1,
]
\addplot [draw=darkblue, fill=darkblue, forget plot, mark=*, mark size=1, only marks]
table{%
x  y
10 0.918899204
15 0.870334566
20 0.824326266
25 0.782394092
30 0.743766355
35 0.708772819
40 0.677646078
45 0.649023226
50 0.624070859
};
\addplot [draw=gray, fill=gray, forget plot, mark=square*, mark size=1, only marks]
table{%
x  y
10 0.869650544
15 0.797343137
20 0.729601043
25 0.667850787
30 0.612656904
35 0.564696283
40 0.524028478
45 0.489575904
50 0.46011553
};
\addplot [draw=lightskyblue, fill=lightskyblue, forget plot, mark size=1, mark=triangle*, only marks]
table{%
x  y
10 0.37445486
15 0.272896037
20 0.236244644
25 0.216597496
30 0.203983988
35 0.195102632
40 0.189358093
45 0.185571269
50 0.182708439
};
\addplot [draw=darkblue, fill=darkblue, forget plot, mark=*, mark size=1, only marks]
table{%
x  y
10 0.283251178
15 0.170330479
20 0.1136267
25 0.082194776
30 0.062738909
35 0.049570974
40 0.040204327
45 0.03311216
50 0.027728487
};
\addplot [draw=gray, fill=gray, forget plot, mark=square*, mark size=1, only marks]
table{%
x  y
10 0.763341567
15 0.581940765
20 0.423946589
25 0.312177751
30 0.236242648
35 0.183537573
40 0.145863203
45 0.117890386
50 0.096746895
};
\addplot [draw=lightskyblue, fill=lightskyblue, forget plot, mark=triangle*, mark size=1, only marks]
table{%
x  y
10 0.764175312
15 0.627430909
20 0.488090669
25 0.367564675
30 0.276171733
35 0.210372783
40 0.164040726
45 0.13052923
50 0.106243719
};
\addplot [semithick, darkblue, mark=*, mark size=1, mark options={solid}]
table {%
10 0.918964744
15 0.870133266
20 0.824464347
25 0.782387997
30 0.743950714
35 0.709071742
40 0.67759175
45 0.649296664
50 0.623938581
};
\addplot [semithick, gray, mark=square*, mark size=1, mark options={solid}]
table {%
10 0.869468203
15 0.797466951
20 0.729831973
25 0.66791767
30 0.612896854
35 0.565082187
40 0.524152377
45 0.489442172
50 0.460147318
};
\addplot [semithick, lightskyblue, mark=triangle*, mark size=1, mark options={solid}]
table {%
10 0.374542264
15 0.272543371
20 0.236426758
25 0.216772981
30 0.203828618
35 0.195124395
40 0.189324799
45 0.18546051
50 0.182859306
};
\addplot [semithick, darkblue, dashed, mark=*, mark size=1, mark options={solid}]
table {%
10 0.283430632
15 0.170370217
20 0.113590568
25 0.08220138
30 0.062737987
35 0.049589394
40 0.040170687
45 0.033142877
50 0.027740889
};
\addplot [semithick, gray, dashed, mark=square*, mark size=1, mark options={solid}]
table {%
10 0.763702584
15 0.582167541
20 0.423885631
25 0.31218025
30 0.236241816
35 0.183573199
40 0.145810258
45 0.117894607
50 0.096739773
};
\addplot [semithick, lightskyblue, dashed, mark=triangle*, mark size=1, mark options={solid}]
table {%
10 0.764393106
15 0.6275908
20 0.488027113
25 0.367864817
30 0.276268627
35 0.210609492
40 0.164112285
45 0.130696874
50 0.106086695
};
\end{axis}

\end{tikzpicture}
    }
    \hfill
    \subfloat[Avg. energy, uncorrelated\label{fig:E_vs_lambda}]{
        % This file was created with tikzplotlib v0.10.1.
\begin{tikzpicture}

\begin{axis}[
scale only axis,
width=\four,
height=\four,
xlabel={\(\displaystyle \lambda_a = \lambda_q\) [packets/s]},
xmajorgrids,
xmin=8, xmax=52,
ylabel={\(\displaystyle \bar{E}\) [mJ]},
ymajorgrids,
ymin=0.12, ymax=1.08,
]
\addplot [draw=darkblue, fill=darkblue, forget plot, mark=*, mark size=1, only marks]
table{%
x  y
10 0.165806893
15 0.216843772
20 0.266953074
25 0.315491
30 0.362203022
35 0.406680766
40 0.44849762
45 0.488143854
50 0.524813634
};
\addplot [draw=gray, fill=gray, forget plot, mark=square*, mark size=1, only marks]
table{%
x  y
10 0.338068616
15 0.39599586
20 0.451040382
25 0.504749646
30 0.557055781
35 0.606755551
40 0.652992921
45 0.695746782
50 0.734759968
};
\addplot [draw=lightskyblue, fill=lightskyblue, forget plot, mark=triangle*, mark size=1, only marks]
table{%
x  y
10 0.579869458
15 0.68826432
20 0.767091778
25 0.82895169
30 0.879862632
35 0.922343802
40 0.957546917
45 0.987391791
50 1.013128772
};
\addplot [semithick, darkblue, forget plot] %, mark=*, mark size=1, mark options={solid}]
table {%
10 0.165812367
15 0.216922755
20 0.26689897
25 0.315398078
30 0.362030983
35 0.406479714
40 0.44851992
45 0.488022335
50 0.52494377
};
%\addlegendentry{Q = 1}
\addplot [semithick, gray, forget plot] %, mark=square*, mark size=1, mark options={solid}]
table {%
10 0.338102088
15 0.396049948
20 0.450969155
25 0.504675518
30 0.55680497
35 0.606470719
40 0.652931623
45 0.69575406
50 0.734796534
};
%\addlegendentry{Q = 4}
\addplot [semithick, lightskyblue, forget plot] %, mark=triangle*, mark size=1, mark options={solid}]
table {%
10 0.579795758
15 0.688516842
20 0.766862301
25 0.828808495
30 0.879833088
35 0.922218908
40 0.957631476
45 0.987506815
50 1.013011069
};
% \addlegendentry{Q = 7}

\end{axis}

\end{tikzpicture}
    }
    \\
    \vspace{-3mm}
    \subfloat[Avg. success prob., correlated\label{fig:pq_pa_correlated}]{
        % This file was created with tikzplotlib v0.10.1.
\begin{tikzpicture}

\begin{axis}[
scale only axis,
width=\four,
height=\four,
xlabel={\(\displaystyle \lambda_a = \lambda_q\) [packets/s]},
xmajorgrids,
xmin=8, xmax=52,
ylabel={\(\displaystyle \bar{p}_a\), \(\displaystyle \bar{p}_q\)},
ymajorgrids,
ymin=0, ymax=1,
]
\addplot [draw=darkblue, fill=darkblue, forget plot, mark=*, mark size=1, only marks]
table{%
x  y
10 0.88955457
15 0.843079395
20 0.80197418
25 0.763851782
30 0.730389057
35 0.699316814
40 0.671383146
45 0.646436942
50 0.623291475
};
\addplot [draw=gray, fill=gray, forget plot, mark=square*, mark size=1, only marks]
table{%
x  y
10 0.824283837
15 0.764096562
20 0.706867862
25 0.656354824
30 0.609966981
35 0.569317038
40 0.534011669
45 0.501790491
50 0.474420362

};
\addplot [draw=lightskyblue, fill=lightskyblue, forget plot, mark size=1, mark=triangle*, only marks]
table{%
x  y
10 0.546894171
15 0.458649598
20 0.39855995
25 0.34927709
30 0.309771131
35 0.279357099
40 0.255500787
45 0.237450567
50 0.22523034
};
\addplot [draw=darkblue, fill=darkblue, forget plot, mark=*, mark size=1, only marks]
table{%
x  y
10 0.269871788
15 0.160029781
20 0.105513731
25 0.075790195
30 0.057724052
35 0.045684087
40 0.03706195
45 0.0307127
50 0.02588665
};
\addplot [draw=gray, fill=gray, forget plot, mark=square*, mark size=1, only marks]
table{%
x  y
10 0.707007585
15 0.530140861
20 0.381037381
25 0.279523066
30 0.211543445
35 0.165381606
40 0.132546618
45 0.108041709
50 0.089874093
};
\addplot [draw=lightskyblue, fill=lightskyblue, forget plot, mark=triangle*, mark size=1, only marks]
table{%
x  y
10 0.679938452
15 0.582100886
20 0.467624396
25 0.359890593
30 0.273716302
35 0.210166652
40 0.164274353
45 0.130783691
50 0.106644331
};
\addplot [semithick, darkblue, mark=*, mark size=1, mark options={solid}]
table {%
10 0.88955457
15 0.843079395
20 0.80197418
25 0.763851782
30 0.730389057
35 0.699316814
40 0.671383146
45 0.646436942
50 0.623291475
};
\addplot [semithick, gray, mark=square*, mark size=1, mark options={solid}]
table {%
10 0.824283837
15 0.764096562
20 0.706867862
25 0.656354824
30 0.609966981
35 0.569317038
40 0.534011669
45 0.501790491
50 0.474420362
};
\addplot [semithick, lightskyblue, mark=triangle*, mark size=1, mark options={solid}]
table {%
10 0.546894171
15 0.458649598

20 0.39855995
25 0.34927709
30 0.309771131
35 0.279357099
40 0.255500787
45 0.237450567
50 0.22523034
};
\addplot [semithick, darkblue, dashed, mark=*, mark size=1, mark options={solid}]
table {%
10 0.269871788
15 0.160029781
20 0.105513731
25 0.075790195
30 0.057724052
35 0.045684087
40 0.03706195
45 0.0307127
50 0.02588665
};
\addplot [semithick, gray, dashed, mark=square*, mark size=1, mark options={solid}]
table {%
10 0.707007585
15 0.530140861
20 0.381037381
25 0.279523066
30 0.211543445
35 0.165381606
40 0.132546618
45 0.108041709
50 0.089874093
};
\addplot [semithick, lightskyblue, dashed, mark=triangle*, mark size=1, mark options={solid}]
table {%
10 0.679938452
15 0.582100886
20 0.467624396
25 0.359890593
30 0.273716302
35 0.210166652
40 0.164274353
45 0.130783691
50 0.106644331
};
\end{axis}

\end{tikzpicture}
    }
    \hfill
    \subfloat[Avg. energy, correlated\label{fig:E_correlated}]{
        % This file was created with tikzplotlib v0.10.1.
\begin{tikzpicture}

\begin{axis}[
scale only axis,
width=\four,
height=\four,
xlabel={\(\displaystyle \lambda_a = \lambda_q\) [packets/s]},
xmajorgrids,
xmin=8, xmax=52,
ylabel={\(\displaystyle \bar{E}\) [mJ]},
ymajorgrids,
ymin=0.12, ymax=1.08,
]
\addplot [draw=darkblue, fill=darkblue, forget plot, mark=*, mark size=1, only marks]
table{%
x  y
10 0.208652972
15 0.269686639
20 0.323851758
25 0.37394654
30 0.417895562
35 0.458679963
40 0.496160103
45 0.530009849
50 0.561174586

};
\addplot [draw=gray, fill=gray, forget plot, mark=square*, mark size=1, only marks]
table{%
x  y
10 0.400335629
15 0.465602237
20 0.523605305
25 0.574232715
30 0.62086247
35 0.662610394
40 0.69990453
45 0.735192572
50 0.765987783

};
\addplot [draw=lightskyblue, fill=lightskyblue, forget plot, mark=triangle*, mark size=1, only marks]
table{%
x  y
10 0.66028142
15 0.731729425
20 0.787968666
25 0.83832608
30 0.883433736
35 0.922774277
40 0.957591898
45 0.987421596
50 1.011752223
};
\addplot [semithick, darkblue, forget plot] %, mark=*, mark size=1, mark options={solid}]
table {%
10 0.208652972
15 0.269686639
20 0.323851758
25 0.37394654
30 0.417895562
35 0.458679963
40 0.496160103
45 0.530009849
50 0.561174586

};
%\addlegendentry{Q = 1}
\addplot [semithick, gray, forget plot] %, mark=square*, mark size=1, mark options={solid}]
table {%
10 0.400335629
15 0.465602237
20 0.523605305
25 0.574232715
30 0.62086247
35 0.662610394
40 0.69990453
45 0.735192572
50 0.765987783
};
%\addlegendentry{Q = 4}
\addplot [semithick, lightskyblue, forget plot] %, mark=triangle*, mark size=1, mark options={solid}]
table {%
10 0.66028142
15 0.731729425
20 0.787968666
25 0.83832608
30 0.883433736
35 0.922774277
40 0.957591898
45 0.987421596
50 1.011752223
};
% \addlegendentry{Q = 7}

\end{axis}

\end{tikzpicture}
    }
    \caption{\revise{Comparison of analytical derivation (lines) and numerical simulations (markers) of average success probability and energy consumption metrics vs. average incoming anomalies $\lambda_a$ and queries $\lambda_q$ for different $Q$ values and with or without alarm correlations.}}
    \label{fig:analytical}
\end{figure}
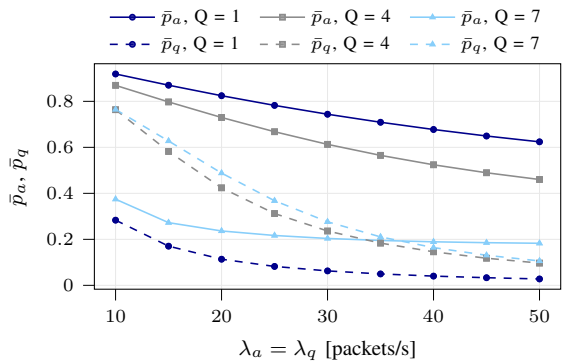

Figs.~\ref{fig:analytical_alarm_query} \revise{and~\ref{fig:pq_pa_correlated}} show the average success probability of queries, $\bar{p}_q$, and alarms, $\bar{p}_a$, versus the average  arrival rate $\lambda_q$ and  $\lambda_a$ \revise{for uncorrelated and correlated traffic}, respectively.
We set $\lambda_q = \lambda_a$ to observe how the system behaves when the query and alarm traffic loads are equal on average. The results compare our theoretical analysis (lines) and computer simulation (markers) considering $Q \in \{1, 4, 7\}$. A perfect correspondence between the theoretical analysis and the numerical simulations is demonstrated, thereby validating our results. Furthermore, as $\lambda_a =\lambda_q$ increases, both $\bar{p}_a$ and $\bar{p}_q$ decrease. %This occurs due to a higher number of collisions between alarmed nodes during the push sub-frame, which reduces $\bar{p}_a$, as well as the increased number of failed alarms that retry transmission in subsequent frames, along with a higher number of queries, which decreases $\bar{p}_q$. The results demonstrate that, on average, as $Q$ increases, $\bar{p}_q$ improves while $\bar{p}_a$ decreases, and vice versa. 

\revise{This is due to the higher collision rate among alarmed nodes during the push sub-frame, which directly reduces $\bar{p}_a$. Consequently, the accumulation of retransmitted alarms, combined with the higher query arrival rate, increases the contention for available slots, thereby lowering $\bar{p}_q$. The results also highlight a fundamental trade-off: increasing $Q$ improves $\bar{p}_q$ at the expense of $\bar{p}_a$, and vice versa. While correlated alarms yield similar trends to the uncorrelated case, a slight improvement in $\bar{p}_a$ is observed for $Q=7$. This is attributed to the higher probability of ``empty" frames under correlated traffic; the absence of active alarms in these frames eliminates collisions ($p_a(t)=1$), thereby reducing the overall retransmission burden and indirectly increasing $\bar{p}_q$.}

Figs.~\ref{fig:E_vs_lambda} and~\ref{fig:E_correlated} illustrate the average energy consumed by the system, $\bar{E}$, as a function of $\lambda_q$ and $\lambda_a$, \revise{for uncorrelated and correlated traffic, respectively}. The parameter settings remain the same as in the previous graph, and a perfect correspondence between the theoretical analysis and the numerical simulation is still observable. Furthermore, as $\lambda_a =\lambda_q$ increases, the average energy consumed by the system also increases. This is due to a higher input load, which leads to a greater number of transmission attempts %resulting in more time spent transmitting 
and, consequently, higher power consumption. Remark that the configuration consuming the lowest amount of energy for a given input traffic is the one with $Q = 1$. This is because fewer nodes receive the \gls{wus}, leading to a lower $E_1(t)$ compared to $Q = \{4,7\}$.  Moreover, thanks to a greater number of available push slots, the number of retransmissions is lower, which results in a lower $E_2(t)$. Additionally, $E_3(t)$ is reduced since the \gls{wur} remains active for a shorter duration.
%However, when $Q = 1$, the query success probability, $\bar{p}_q$, significantly drops, as shown in Fig~\ref{fig:analytical_alarm_query}. This occurs because nearly the entire frame is allocated for alarms, leaving insufficient resources for query handling and ultimately preventing their successful transmission.  
\revise{However, when $Q=1$, the query success probability, $\bar{p}_q$, degrades significantly as the frame is dominated by alarm traffic, leaving insufficient resources for query handling. Finally, correlated traffic exhibits a marginally higher $\bar{E}$ compared to the uncorrelated scenario, as bursts of simultaneous alarms intensify retransmissions and per-frame energy expenditure.}

The previous findings highlight the importance of carefully selecting the parameter $Q$ to account for the performance of both types of traffic, as well as the system power consumption.

% \revise{Given the marginal performance differences between the two traffic models, in the next section we focus exclusively on the uncorrelated case.}

\subsection{System-level performance characterization}

\begin{figure}[t!]
\centering
   % This file was created with tikzplotlib v0.10.1.
\begin{tikzpicture}

\begin{axis}[
height=4cm,
width=5.5cm,
scale only axis,
legend cell align={left},
legend style={at={(0, 0)}, anchor=south west, /tikz/every even column/.append style={column sep=0.2cm}},
tick align=outside,
tick pos=left,
xlabel={\(\displaystyle \bar{p}_a\)},
xmajorgrids,
xmin=0, xmax=1.0,
ylabel={\(\displaystyle \bar{p}_q\)},
ymajorgrids,
ymin=0, ymax=0.8,
]
\addplot [draw=gray, fill=gray, mark=square*, only marks, mark size=1pt]
table{%
x  y
0.92607227 0
0.919486326 0.128216947
0.909850154 0.253881729
0.895161246 0.37518072
0.875695038 0.487013792
0.838336666 0.585225053
0.753618534 0.662717853
0.483969734 0.684095376
0.176284865 0.644916659
};
\addlegendentry{\emph{a}) $\lambda_a = 10$ alarms/s; $\lambda_q = 2 \lambda_a$}

\addplot [draw=darkblue, fill=darkblue, mark=*, only marks, mark size=1pt]
table{%
x  y
0.880633291 0
0.870165463 0.170369497
0.855007322 0.328622562
0.833002691 0.468227055

0.796645485 0.58393221
0.732271766 0.662284487
0.590585154 0.696107032
0.275649132 0.627631324
0.145004039 0.581327069
};
\addlegendentry{\emph{b}) $\lambda_a = 15$ alarms/s; $\lambda_q = \lambda_a$}

\addplot [draw=lightskyblue, fill=lightskyblue, mark=triangle*, only marks, mark size=1pt]
table{%
x  y
0.837949583 0
0.824771702 0.25244102
0.8039651 0.446685365
0.770025768 0.58158358
0.716466868 0.661238959
0.615811244 0.688030338
0.41196889 0.642101368
0.159023834 0.505724076
0.101357746 0.468274431
};
\addlegendentry{\emph{c}) $\lambda_a = 20$ alarms/s; $\lambda_q = \lambda_a / 2$}

\addplot [semithick, darkblue, forget plot]
table {%
0.880633291 0
0.870165463 0.170369497
0.855007322 0.328622562
0.833002691 0.468227055
0.796645485 0.58393221
0.732271766 0.662284487
0.590585154 0.696107032
0.275649132 0.627631324
0.145004039 0.581327069
};
\addplot [semithick, gray, forget plot]
table {%
0.92607227 0
0.919486326 0.128216947
0.909850154 0.253881729
0.895161246 0.37518072
0.875695038 0.487013792
0.838336666 0.585225053
0.753618534 0.662717853
0.483969734 0.684095376
0.176284865 0.644916659
};
\addplot [semithick, lightskyblue, forget plot]
table {%
0.837949583 0
0.824771702 0.25244102
0.8039651 0.446685365
0.770025768 0.58158358
0.716466868 0.661238959
0.615811244 0.688030338
0.41196889 0.642101368
0.159023834 0.505724076
0.101357746 0.468274431
};

\draw (axis cs:0.880633291,0.01) node[
  scale=0.5,
  anchor=base west,
  text=darkblue,
  rotate=0.0
]{0};
\draw (axis cs:0.94607227,0.01) node[
  scale=0.5,
  anchor=base west,
  text=gray,
  rotate=0.0
]{0};
\draw (axis cs:0.81,0.01) node[
  scale=0.5,
  anchor=base west,
  text=lightskyblue,
  rotate=0.0
]{0};
\draw (axis cs:0.870165463,0.180369497) node[
  scale=0.5,
  anchor=base west,
  text=darkblue,
  rotate=0.0
]{1};
\draw (axis cs:0.939486326,0.138216947) node[
  scale=0.5,
  anchor=base west,
  text=gray,
  rotate=0.0
]{1};
\draw (axis cs:0.8,0.22244102) node[
  scale=0.5,
  anchor=base west,
  text=lightskyblue,
  rotate=0.0
]{1};
\draw (axis cs:0.855007322,0.338622562) node[
  scale=0.5,
  anchor=base west,
  text=darkblue,
  rotate=0.0
]{2};
\draw (axis cs:0.929850154,0.263881729) node[
  scale=0.5,
  anchor=base west,
  text=gray,
  rotate=0.0
]{2};
\draw (axis cs:0.775,0.416685365) node[
  scale=0.5,
  anchor=base west,
  text=lightskyblue,
  rotate=0.0
]{2};
\draw (axis cs:0.833002691,0.478227055) node[
  scale=0.5,
  anchor=base west,
  text=darkblue,
  rotate=0.0
]{3};
\draw (axis cs:0.915161246,0.38518072) node[
  scale=0.5,
  anchor=base west,
  text=gray,
  rotate=0.0
]{3};
\draw (axis cs:0.745,0.55158358) node[
  scale=0.5,
  anchor=base west,
  text=lightskyblue,
  rotate=0.0
]{3};
\draw (axis cs:0.796645485,0.59393221) node[
  scale=0.5,
  anchor=base west,
  text=darkblue,
  rotate=0.0
]{4};
\draw (axis cs:0.895695038,0.497013792) node[
  scale=0.5,
  anchor=base west,
  text=gray,
  rotate=0.0
]{4};
\draw (axis cs:0.696466868,0.631238959) node[
  scale=0.5,
  anchor=base west,
  text=lightskyblue,
  rotate=0.0
]{4};
\draw (axis cs:0.605811244,0.653030338) node[
  scale=0.5,
  anchor=base west,
  text=lightskyblue,
  rotate=0.0
]{5};
\draw (axis cs:0.85,0.59) node[
  scale=0.5,
  anchor=base west,
  text=gray,
  rotate=0.0
]{5};
\draw (axis cs:0.72,0.675) node[
  scale=0.5,
  anchor=base west,
  text=darkblue,
  rotate=0.0
]{5};
\draw (axis cs:0.40196889,0.607101368) node[
  scale=0.5,
  anchor=base west,
  text=lightskyblue,
  rotate=0.0
]{6};
\draw (axis cs:0.75,0.68) node[
  scale=0.5,
  anchor=base west,
  text=gray,
  rotate=0.0
]{6};
\draw (axis cs:0.585,0.705) node[
  scale=0.5,
  anchor=base west,
  text=darkblue,
  rotate=0.0
]{6};
\draw (axis cs:0.149023834,0.470724076) node[
  scale=0.5,
  anchor=base west,
  text=lightskyblue,
  rotate=0.0
]{7};
\draw (axis cs:0.47,0.7) node[
  scale=0.5,
  anchor=base west,
  text=gray,
  rotate=0.0
]{7};
\draw (axis cs:0.265,0.637) node[
  scale=0.5,
  anchor=base west,
  text=darkblue,
  rotate=0.0
]{7};
\draw (axis cs:0.15,0.60) node[
  scale=0.5,
  anchor=base,
  text=darkblue,
  rotate=0.0
]{$Q=8$};
\draw (axis cs:0.136284865,0.654916659) node[
  scale=0.5,
  anchor=base west,
  text=gray,
  rotate=0.0
]{$Q=8$};
\draw (axis cs:0.081357746,0.438274431) node[
  scale=0.5,
  anchor=base west,
  text=lightskyblue,
  rotate=0.0
]{$Q=8$};
\end{axis}

\end{tikzpicture}
    \caption{Success probability of queries, $\bar{p}_q$ as a function of the success probability of alarms, $\bar{p}_a$, when $Q\in[0, 8]$, for different $\lambda_a$ and $\lambda_q$ values.} 
    \label{fig:pq_vs_pa}
\end{figure}
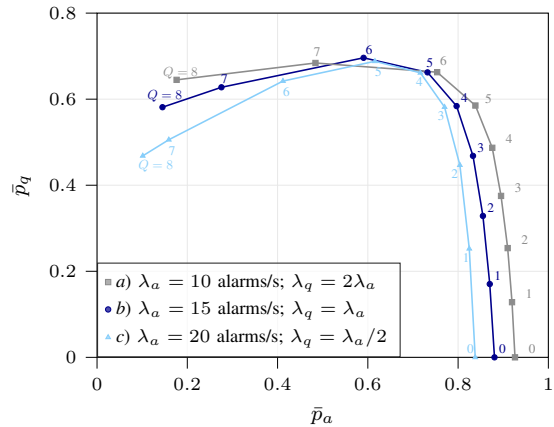

Fig.~\ref{fig:pq_vs_pa} shows the success probability of queries, $\bar{p}_q$, as a function of the success probability of alarms, $\bar{p}_a$, when $Q\in[0,8]$, \revise{considering only uncorrelated traffic, since the results for the correlated case are similar}. Three different case studies are analyzed, all characterized by the same total input traffic load, $\lambda = \lambda_a + \lambda_q = 30$ [packets/s]: \emph{a}) $\lambda_a = 10$ and $\lambda_q = 2 \lambda_a$; \emph{b}) $\lambda_a = \lambda_q = 15$;  \emph{c}) $\lambda_a = 20$ and $\lambda_q = \lambda_a / 2$.  
Each point in the figure represents the values of $\bar{p}_a$ and $\bar{p}_q$ obtained for a specific $Q$, corresponding to a particular configuration of pull and push sub-frames durations.
When $Q = 0$, no pull slots are present, meaning the entire frame is reserved for push-based communication and alarm handling. In this case, $\bar{p}_q$ is zero regardless of the input traffic because no query can be served. 
Conversely, with $Q = 8$, no push slots are present, but alarmed nodes can still report their anomalies if they receive a \gls{wus} -- cf. Assumption~\ref{assu:alarm-tx}. 
Therefore, the alarmed nodes contend for the resources -- the $Q$ pull slots -- with the non-alarmed ones, explaining why $\bar{p}_a$ is not zero with $Q=8$. 
This also explains why $\bar{p}_q$ degrades when $Q=8$ compared to lower values of $Q$: when no push slots are considered, all the alarm traffic is multiplexed in the pull slots, hindering the performance of serving queries. This is also corroborated by the fact that the highest degradation happens for the case study \emph{c}), which has the highest alarm traffic load.
Also, $\bar{p}_q$ increases gradually as $Q$ decreases until reaching an optimal point; beyond that, performance degrades. The optimal values for queries are $Q=\{7, 6, 5\}$ for the three use cases, respectively. Accordingly to the previous discussion, the higher the alarm traffic load, the lower is the value of $Q$ achieving the maximum $\bar{p}_q$.
On the other hand, the performance in reporting anomalies increases monotonically with the decrease of $Q$. Indeed, when $Q$ decreases, $P$ increases, resulting in a higher push success probability $p_p(t)$ -- see eq.~\eqref{eq:pp(t)}.  Moreover, the achievable $\bar{p}_a$ is higher in the case \emph{a}) compared to the cases \emph{b}) and \emph{c}) as lower alarm traffic load leads to fewer collisions within the push sub-frame. 

\begin{figure}[t!]
\centering
\subfloat{\centering\begin{tikzpicture}
\begin{axis}[
    width=0cm,
    height=0cm,
    axis line style={draw=none},
    tick style={draw=none},
    at={(0,0)},
    scale only axis,
    xmin=0,
    xmax=1,
    xtick={},
    ymin=0,
    ymax=1,
    ytick={},
    legend cell align={left},
    legend style={at={(0.5, 1,05)}, anchor=south, draw=none, fill=none, /tikz/every even column/.append style={column sep=5pt}},
    legend image code/.code={
            \draw [#1] (0cm,-0.1cm) rectangle (0.15cm,4pt); },
    legend columns = 3,
    ]

\addlegendimage{draw=white,fill=gray}
\addlegendentry{\emph{a}) $\lambda_a =$ 10; $\lambda_q = 2 \lambda_a$}
\addlegendimage{draw=darkblue,fill=darkblue}
\addlegendentry{\emph{b}) $\lambda_a = \lambda_q = 15$}
\addlegendimage{draw=lightskyblue,fill=lightskyblue}
\addlegendentry{\emph{c}) $\lambda_a =$ 20; $\lambda_q = \lambda_a / 2$}
\addlegendimage{draw=darkblue,fill=darkblue}
\end{axis}
\end{tikzpicture}}\\
\setcounter{subfigure}{0}
\subfloat[Uncorrelated alarms. \label{fig:ps_vs_Q_uncorrelated}]{\begin{tikzpicture}

\begin{axis}[
    height=3.3cm,
    width=\twowidth,
    scale only axis,
    xlabel={$Q$},
    xmin=-0.5, xmax=8.5,
    xtick={0,1,...,8},
    xlabel shift=-3.5pt,
    ylabel={$\bar{p}_s$},
    ymin=0, ymax=0.743,
    ylabel shift=-3pt,
    ybar=0pt,
    bar width=2pt,
    xtick distance=3.5mm,
    tick align=inside,
    ]

    \addplot[style={gray,fill={gray}}] 
    table[x=Q, y=ps, col sep=comma] 
    {tikz/csv/uncorrelated_a10_q20.csv};
    
    \addplot[style={darkblue,fill={darkblue}}] 
    table[x=Q, y=ps, col sep=comma] 
    {tikz/csv/uncorrelated_a15_q15.csv};
    
    \addplot[style={lightskyblue,fill={lightskyblue}}] 
    table[x=Q, y=ps, col sep=comma] 
    {tikz/csv/uncorrelated_a20_q10.csv};

\end{axis}
\end{tikzpicture}}
\subfloat[Correlated alarms. \label{fig:ps_vs_Q_correlated}]{\begin{tikzpicture}
\begin{axis}[
    height=3.3cm,
    width=\twowidth,
    scale only axis,
    xlabel={$Q$},
    xmin=-0.5, xmax=8.5,
    xtick={0,1,...,8},
    xlabel shift=-3.5pt,
    ymin=0, ymax=0.743,
    %ylabel shift=-3pt,
    ybar=0pt,
    bar width=2pt,
    xtick distance=3.5mm,
    tick align=inside,
    ]
    
    \addplot[style={gray,fill={gray}}] 
    table[x=Q, y=ps, col sep=comma] 
    {tikz/csv/correlated_a10_q20.csv};
    
    \addplot[style={darkblue,fill={darkblue}}] 
    table[x=Q, y=ps, col sep=comma] 
    {tikz/csv/correlated_a15_q15.csv};
    
    \addplot[style={lightskyblue,fill={lightskyblue}}] 
    table[x=Q, y=ps, col sep=comma] 
    {tikz/csv/correlated_a20_q10.csv};

\end{axis}
\end{tikzpicture}}
    \caption{\revise{Success probability trade-off, $\bar{p}_s$ as a function of $Q$ for different $\lambda_a$ [alarms/s] and $\lambda_q$ [queries/s].}}
    \label{fig:ps_vs_Q}
\end{figure}

\revise{Figs.~\ref{fig:ps_vs_Q_uncorrelated} and ~\ref{fig:ps_vs_Q_correlated} show} the behavior of the success probability trade-off, $\bar{p}_s$, as a function of $Q$, for the three case studies defined in Fig.~\ref{fig:pq_vs_pa} \revise{and for uncorrelated and correlated traffic, respectively}. \revise{Results in Fig. \ref{fig:ps_vs_Q_uncorrelated}} show that the highest $p_s$ is obtained with $Q = \{6, 5, 3\}$, for cases \emph{a}), \emph{b}), and \emph{c}), respectively. As expected, when queries have a higher traffic load than alarms ($\lambda_q > \lambda_a$), the maximum value of $\bar{p}_s$ is achieved by a higher $Q$, meaning that more slots are dedicated to pull-based communication; if the traffic of alarms is higher ($\lambda_q < \lambda_a$), lower values of $Q$ are preferable, also to lower the probability of alarmed nodes receiving the \gls{wus}. \revise{In the correlated alarm case of Fig. \ref{fig:ps_vs_Q_correlated}, the optimal trade-off slightly shifts and the maximum $\bar{p}_s$ is obtained for $Q=\{5,4,4\}$ in cases \emph{a}), \emph{b}), and \emph{c}), respectively. Although the overall trends remain similar to the uncorrelated scenario, correlation increases variability in the number of alarms per frame, creating both empty frames and alarm-dense frames. As a result, performance depends on both the input traffic load and the specific distribution of alarms. When $\lambda_a > \lambda_q$, alarm-dense frames have a strong impact on performance, reducing both $\bar{p}_q$ and the overall success probability $\bar{p}_s$, particularly for small $Q$. Conversely, when $\lambda_q > \lambda_a$, empty or lightly loaded frames dominate, leading to higher $\bar{p}_s$ compared to the uncorrelated case.}
%For low $Q$, the higher probability of heavily loaded alarm frames reduces $\bar{p}_q$ due to the higher likelihood of waking up already alarmed nodes, leading to a lower $\bar{p}_s$ than in the uncorrelated case. Conversely, for larger $Q$, the higher probability of empty frames mitigates contention and results in a higher $\bar{p}_s$.}

\begin{figure}[t!]
\centering
\subfloat{\centering\begin{tikzpicture}
\begin{axis}[
    width=0cm,
    height=0cm,
    axis line style={draw=none},
    tick style={draw=none},
    at={(0,0)},
    scale only axis,
    xmin=0,
    xmax=1,
    xtick={},
    ymin=0,
    ymax=1,
    ytick={},
    legend cell align={left},
    legend style={at={(0.5, 1,05)}, anchor=south, draw=none, fill=none, /tikz/every even column/.append style={column sep=2pt}},
    legend columns = 3,
    ]

\addlegendimage{gray, only marks, mark=square*, mark options={fill=gray}, mark size=2}
\addlegendentry{\emph{a}) $\lambda_a =$ 10; $\lambda_q = 2 \lambda_a$}
    
\addlegendimage{darkblue, only marks, mark=*, mark options={fill=darkblue}, mark size=2}
\addlegendentry{\emph{b}) $\lambda_a = \lambda_q = 15$}

\addlegendimage{lightskyblue, only marks, mark=triangle*, mark options={fill=lightskyblue}, mark size=2,}
\addlegendentry{\emph{c}) $\lambda_a =$ 20; $\lambda_q = \lambda_a / 2$}

\end{axis}
\end{tikzpicture}}\\
\vspace{-3.5mm}
\setcounter{subfigure}{0}
\subfloat[Uncorrelated alarms.]{ \begin{tikzpicture}
\begin{axis}[
    width=\twowidth,
    height=\twowidth,
    scale only axis,
    xlabel={$\bar{E}$ [mJ]},
    xmin=0, xmax=0.88,
    ylabel={$\bar{p}_s$},
    ymin=0.2, ymax=0.75,
    % Enable nodes near coords for labels
    nodes near coords,
    % Customize the node appearance
    point meta=explicit symbolic,
    every node near coord/.append style={
        font=\tiny,
        anchor=south,
    },
]

% Data: x, y, label
\addplot[draw=gray, fill=gray, mark=square*, mark size=1, only marks, nodes near coords style={color=gray, anchor=north, inner sep=0.9pt}] 
    table[x=E, y=ps, meta=label, col sep=comma] {tikz/csv/uncorrelated_a10_q20.csv};
% Connect the dots with a line
\addplot[draw=gray, semithick] 
table[x=E, y=ps, col sep=comma] {tikz/csv/uncorrelated_a10_q20.csv};

\addplot[draw=lightskyblue, fill=lightskyblue, mark=triangle*, mark size=1, only marks, nodes near coords style={color=lightskyblue, anchor=south, inner sep=1pt, xshift=0pt}] 
    table[x=E, y=ps, meta=label, col sep=comma] {tikz/csv/uncorrelated_a20_q10.csv};

% Connect the dots with a line
\addplot[draw=lightskyblue, semithick] table[x=E, y=ps, col sep=comma] {tikz/csv/uncorrelated_a20_q10.csv};

\addplot[draw=darkblue, fill=darkblue, mark=*, mark size=1, only marks, nodes near coords style={color=darkblue, anchor=west, inner sep=0.9pt, xshift=0pt}] 
    table[x=E, y=ps, meta=label, col sep=comma] {tikz/csv/uncorrelated_a15_q15.csv};

% Connect the dots with a line
\addplot[draw=darkblue, semithick] table[x=E, y=ps, col sep=comma] {tikz/csv/uncorrelated_a15_q15.csv};

\end{axis}
\end{tikzpicture}\label{fig:E_vs_ps_uncorrelated}} \hfill
\subfloat[Correlated alarms.]{\begin{tikzpicture}
\begin{axis}[
    width=\twowidth,
    height=\twowidth,
    scale only axis,
    xlabel={$\bar{E}$ [mJ]},
    xmin=0, xmax=0.88,
    ylabel={$\bar{p}_s$},
    ymin=0.2, ymax=0.75,
    % Enable nodes near coords for labels
    nodes near coords,
    % Customize the node appearance
    point meta=explicit symbolic,
    every node near coord/.append style={
        font=\tiny,
        anchor=south,
    },
]

% Data: x, y, label
\addplot[draw=gray, fill=gray, mark=square*, mark size=1, only marks, nodes near coords style={color=gray, anchor=south}] 
    table[x=E, y=ps, meta=label, col sep=comma] {tikz/csv/correlated_a10_q20.csv};
% Connect the dots with a line
\addplot[draw=gray, semithick] 
table[x=E, y=ps, col sep=comma] {tikz/csv/correlated_a10_q20.csv};

\addplot[draw=darkblue, fill=darkblue, mark=*, mark size=1, only marks, nodes near coords style={color=darkblue, anchor=south, inner sep=1pt, xshift=2pt}] 
    table[x=E, y=ps, meta=label, col sep=comma] {tikz/csv/correlated_a15_q15.csv};

% Connect the dots with a line
\addplot[draw=darkblue, semithick] table[x=E, y=ps, col sep=comma] {tikz/csv/correlated_a15_q15.csv};

\addplot[draw=lightskyblue, fill=lightskyblue, mark=triangle*, mark size=1, only marks, nodes near coords style={color=lightskyblue, anchor=north, inner sep=0.8pt}] 
    table[x=E, y=ps, meta=label, col sep=comma] {tikz/csv/correlated_a20_q10.csv};

% Connect the dots with a line
\addplot[draw=lightskyblue, semithick] table[x=E, y=ps, col sep=comma] {tikz/csv/correlated_a20_q10.csv};

\end{axis}
\end{tikzpicture}\label{fig:E_vs_ps_correlated}}
\caption{\revise{Success probability--enegy trade-off, $\bar{p}_s$ vs. $\bar{E}$, for different $\lambda_a$, $\lambda_q$.}}
    \label{fig:E_vs_ps}
    \vspace{-5mm}
\end{figure}

\revise{Figs.~\ref{fig:E_vs_ps_uncorrelated} and ~\ref{fig:E_vs_ps_correlated} illustrate} the success probability trade-off, $\bar{p}_s$, as a function of the average energy consumed by the system, $\bar{E}$, for $Q\in[0,8]$ \revise{under uncorrelated and correlated traffic}, considering the three case studies analyzed previously.
\revise{In both figures, as} $Q$ increases from 0 to 7, the energy consumption of the system also increases. However, when $Q = 8$\revise{--meaning that no slots are reserved for the push sub-frame--}energy consumption decreases. This occurs because the alarmed nodes \revise{transmit only if triggered by a \gls{wus}, causing the contribution of $E_2(t)$ in~\eqref{eq: E2} to drop to zero.}  
Conversely, the lowest energy consumption is observed for $Q = 0$, where the system operates exclusively in push mode, \revise{and it is} unable to serve any queries.  
\revise{In this case, only alarmed nodes} contribute to the energy expenditure. 
\revise{Although the qualitative behavior of Figs.~\ref{fig:E_vs_ps_uncorrelated} and~\ref{fig:E_vs_ps_correlated} is similar, quantitative differences arise due to traffic correlation and the relative values of $\lambda_a$ and $\lambda_q$. In presence of correlated alarms, for case study c), the energy consumption for small $Q$ is higher than in case a), whereas the opposite trend is observed in the uncorrelated scenario. This difference stems from the different temporal distribution of alarms, which modifies the contribution of each sub-frame to the overall energy consumption.}
\revise{In Fig. ~\ref{fig:E_vs_ps_uncorrelated},} the values of $Q$ maximizing $\bar{p}_s$ correspond to different value of energy spent: for \emph{a}) $Q = 6$, $\bar{p}_s \approx 0.69$ and $E \approx 471\, \mu$J; for \emph{b}), $Q = 5$, $\bar{p}_s \approx 0.70$ and $\bar{E} \approx 466\,\mu$J; in \emph{c}) $Q = 3$, $p_s \approx 0.71$ and $E \approx 394\,\mu$J. \revise{In contrast, in Fig.~\ref{fig:E_vs_ps_correlated}, the optimal values shift to $Q \in \{5,4,4\}$. However, in all the cases,} the maximum $\bar{p}_s$ is achieved with an intermediate energy consumption. % that allows a certain performance level for both anomaly reporting and serving queries. 
Therefore, the optimal values of $Q$ for $\bar{p}_s$ balance energy consumption while ensuring system functionality.

\begin{figure}[t!]
\centering
\subfloat{\begin{tikzpicture}
\begin{axis}[
    width=0cm,
    height=0cm,
    axis line style={draw=none},
    tick style={draw=none},
    at={(0,0)},
    scale only axis,
    xmin=0,
    xmax=1,
    xtick={},
    ymin=0,
    ymax=1,
    ytick={},
    legend cell align={left},
    legend style={at={(0.5, 1,05)}, anchor=south, draw=none, fill=none, /tikz/every even column/.append style={column sep=5pt}},
    legend image code/.code={        
            \draw[#1] (0cm,-0.1cm) rectangle (0.15cm,4pt);
    },
    legend columns=4,
    ]
MR
\addlegendimage{draw=darkblue,fill=darkblue}
\addlegendentry{WuR}
\addlegendimage{draw=orange,fill=orange}
\addlegendentry{MR}
\addlegendimage{black, dashed, semithick, line legend}
\addlegendentry{RR}
\addlegendimage{black, dotted, semithick, line legend}
\addlegendentry{FSA}

\end{axis}
\end{tikzpicture}}\\
\vspace{-3mm}
\setcounter{subfigure}{0}
\subfloat[Avg. energy consumption\label{fig:wur_vs_benchmarks}]{\begin{tikzpicture}
\begin{axis}[
    height=\twowidth,
    width=\twowidth,
    scale only axis,
    xlabel={$P$},
    xmin=3, xmax=37,
    xtick={5,10,15,...,35},
    xlabel shift=-3.5pt,
    ylabel={$\bar{E}$ [mJ]},
    ymin=0.1, ymax=0.9,
    %ylabel shift=-3pt,
    xtick distance=3.5mm,
    tick align=inside,
    ]
    
    \addplot[style={darkblue,fill={darkblue}}, bar width=2pt, ybar=0pt, bar shift=-1pt] 
    table[x=P, y=E_wur, col sep=comma] 
    {tikz/csv/benchmarks.csv};
    
    \addplot[style={orange,fill={orange}}, bar width=2pt, ybar=0pt, bar shift=1pt] 
    table[x=P, y=E_mr, col sep=comma] 
    {tikz/csv/benchmarks.csv};
    
    % \addplot[black, semithick, dashed] 
    % table[x=P, y=E_rr, col sep=comma] 
    % {tikz/csv/benchmarks.csv};

    % \addplot[black, semithick, dotted] 
    % table[x=P, y=E_fsa, col sep=comma] 
    % {tikz/csv/benchmarks.csv};

    % RR
\addplot [black, dashed, semithick] 
table{
0 0.642
40 0.642
};
% FSA
\addplot [black, dotted, semithick] 
table{
0 0.66
40 0.66
};

\end{axis}
\end{tikzpicture}} \hfill
\subfloat[Avg. energy efficiency\label{fig:E_G_vs_P}]{\begin{tikzpicture}
\begin{axis}[
    height=\twowidth,
    width=\twowidth,
    scale only axis,
    xlabel={$P$},
    xmin=3, xmax=37,
    xtick={5,10,15,...,35},
    xlabel shift=-3.5pt,
    ylabel={$\bar{E} / \bar{S}$ [mJ]},
    ymin=0.01, ymax=0.11,
    ytick={0.02, 0.04, 0.06, 0.08, 0.1},
    yticklabels={0.02, 0.04, 0.06, 0.08, 0.10},
    %ylabel shift=-3pt,
    xtick distance=3.5mm,
    tick align=inside,
    ]
    
    \addplot[style={darkblue,fill={darkblue}}, bar width=2pt, ybar=0pt, bar shift=-1pt] 
    table[x=P, y=ES_wur, col sep=comma] 
    {tikz/csv/benchmarks_ES.csv};
    
    \addplot[style={orange,fill={orange}}, bar width=2pt, ybar=0pt, bar shift=1pt] 
    table[x=P, y=ES_mr, col sep=comma] 
    {tikz/csv/benchmarks_ES.csv};
    
% RR
\addplot [black, dashed, semithick] 
table{
0 0.059847485
40 0.059847485
};
% FSA
\addplot [black, dotted, semithick] 
table{
0 0.074188977
40 0.074188977
};

\end{axis}
\end{tikzpicture}}
\caption{\revise{Energy performance comparison vs $P$, $\lambda_a =\lambda_q = 15$ packets/s.}}    
\label{fig:E_benchmarks}
\vspace{-4mm}
\end{figure}

\subsection{Comparison with benchmarks}
\label{sub:wur_vs_benchmarks}
%\fs{[The text here must be improved: there are many redundancies that can be removed or anyway shortened; the new benchmarks should be explained; I would call the subsection comparison with benchmarks; while there are many things to say here, we need to reduce all of at least half a column.]}
% We compare our proposal based on the use of \gls{wur} devices, with a solution that relies on devices whose main radio remains always active, named \emph{MR} for brevity.
% The latter solution works as follows. The \gls{bs} collects all the queries from the cloud and prepare the scheduling order. Then, it transmits a \emph{single} control signal containing node's \glspl{id} and corresponding slot index for data transmission, occupying $k_s$ slots at the beginning of the pull sub-frame. Then, the push sub-frame is identical to the \gls{wur} solution.
% Therefore, the MR solution will results in a different relation between $Q$ and $P$, allowing for higher number of available pull slots at the cost of constraining all the nodes to keep the main radio active. Specifically, given the number of push slots, the number of queries that can be served by the scheduling approach within the pull sub-frame are 
% \begin{equation}
% \label{eq:q_first}
% Q' = F - P - 1 - k_s.    
% \end{equation}

We compare the proposed \gls{wur}-based solution with three baseline schemes not employing \gls{wur} technology, named Main Radio (MR), Round Robin (RR), and \gls{fsa}. \revise{The comparison is performed only for uncorrelated alarms, since the trends and qualitative behavior are the same for correlated traffic, making the results representative of both cases.}

\emph{MR} \revise{mimics the proposed solution but} all nodes keep their main radio continuously active.
\revise{At the frame start, the \gls{bs} collects all queries from the cloud; then, it broadcasts a \emph{single} control message lasting for $k_s$ slots, encoding 
the scheduled nodes' \glspl{id} and their assigned pull sub-frames' slots, while alarmed nodes attempt access in the push sub-frame.} % solution. 
% For a frame of $F$ slots and $P$ push slots, 
The maximum number of queries that can be served is $Q' = F - P - k_s - k_c$. \revise{$Q' \ge Q$ of the \gls{wus}-based solution, at the cost of higher energy consumption due to continuous main-radio operations.}
\revise{In \emph{RR},} nodes are scheduled sequentially according to their \glspl{id}; if the frame cannot accommodate all nodes, the remaining ones are scheduled in subsequent frames. \revise{The first $k_s$ slots of each frame carry a \gls{bs}' control signal encoding the scheduling and the received queries.} Nodes transmit only if they have been queried or are in alarm \revise{to avoid unnecessary transmissions.} 
\revise{\emph{\gls{fsa}} employs the grant-free \gls{fsa} protocol throughout the frame.} The first $k_s$ slots carry a \revise{\gls{bs}' control signal} listing the nodes for which queries have been received. \revise{Queried and alarmed n}odes then attempt \revise{access, randomly selecting a single slot.} 
Collisions may occur, with alarmed nodes retrying in subsequent frames. 
\revise{The maximum number of queries that can be served for RR and FSA is $Q'' = F - k_s$, where $Q'' > Q' > Q$.} %, since $Q''$ is independent on $P$.}

% \revise{The energy consumption in all three schemes consists of three main components: $1)$ a constant energy term, since all nodes listen during the initial $k_s$ slots to determine whether they are scheduled or queried; $2)$ the transmission energy of the queried nodes during the $Q'$ or $Q''$ slots; and $3)$ the energy consumed by the remaining alarmed nodes, which \textit{a)} under MR attempt transmission in the push sub-frame, \textit{b)} under RR transmit in their assigned slots, or \textit{c)} under \gls{fsa} randomly select one of the slots of the frame for transmission.}

% In terms of the energy consumed by the system, MR has three main contributions: $1$) a constant energy component, as all nodes must switch to reception mode for $k_s$ slots to determine whether they have been scheduled during the pull sub-frame; $2$) the energy spent for transmitting data by the scheduled nodes; and, $3$) the energy consumed by nodes that were not scheduled but are alarmed, and will attempt transmission within the push sub-frame.

% Accordingly, the average energy spent by the system in frame $t$ is:
% \begin{equation} \label{eq:E_mr}
%     \begin{split}
%     E(t) &= N \xi_r k_s \tau + \sum_{u=1}^{\min(N,Q)} \mathcal{P}(n_u(t) = u) u \tau (\xi_t \beta_t + \xi_r \beta_r) + \\
%     &+ \sum_{k=1}^{N} \mathcal{P}(n_p(t) = k) k \tau (\xi_r k_c + \xi_t \beta_t + \xi_r \beta_r),
%     \end{split}
% \end{equation}
%

In our evaluation, we consider the best case \revise{benchmarks} in which the control signaling fits within a single slot, \revise{i.e., $k_s = 1$. \revise{This maximizes the RR and \gls{fsa} throughput being the number of \gls{ue} equal to the number of slots ($F = N = 40$).} Then, the average energy is evaluated over $T_O$ frames, considering that the energy consumption consists of three main components: $1)$ a constant energy term, since all nodes listen during the initial $k_s$ slots to determine whether they are scheduled or queried; $2)$ the transmission energy of the queried nodes during the $Q'$ or $Q''$ slots; and $3)$ the energy consumed by the remaining alarmed nodes, which \textit{a)} under MR attempt transmission in the push sub-frame, \textit{b)} under RR transmit in their assigned slots, or \textit{c)} under \gls{fsa} randomly select one of the slots of the frame for transmission.}

% for all baseline schemes.
% Moreover, since only MR requires an additional listening slot at the beginning of the push sub-frame, we set $k_c = 1$ for MR and $k_c = 0$ for RR and \gls{fsa}. \revise{Then, as in~\eqref{eq: E_wur}, the average energy is obtained by averaging the energy consumed in a frame over $T_O$ frames with scheme-specific bou
%\fs{[CHECK] The probabilities $\mathcal{P}(n_u(t)=u)$ and $\mathcal{P}(n_p(t)=k)$ are given in~\eqref{eq:prob_u_WuS} and~\eqref{eq: prob_np=k}, respectively.The energy expression in~\eqref{eq:E_mr} applies to all three schemes; however, due to the different frame structures, the bounds on $n_u(t)$ and $n_p(t)$ are set accordingly.As in~\eqref{eq: E_wur}, the average energy is then obtained by averaging~\eqref{eq:E_mr} over $T_O$ frames with scheme-specific bounds. }

\revise{Fig.~\ref{fig:wur_vs_benchmarks} compares the average energy consumption of the proposed \gls{wur} solution with MR, RR, and \gls{fsa}. For \gls{wur} and MR, the frame is divided into pull and push sub-frames, so energy varies with $P$, the number of push slots. In contrast, RR and \gls{fsa} use a single frame serving up to $Q'' = 40$ queries, resulting in constant but higher energy consumption. As $P$ increases, energy decreases for both \gls{wur} and MR, since more push slots improve alarm success and reduce retransmissions. MR consistently consumes more energy than \gls{wur}, because all nodes must keep the main radio active to receive scheduling signals, unlike the low-power \gls{wus} reception. RR and \gls{fsa} also require more energy, except for \gls{wur} at $P=5$, where low alarm success leads to additional retransmissions. Their higher energy mainly stems from keeping the main radio active for multiple slots. Note that, since $N = 40$, RR can schedule all nodes within a single frame, representing an optimal scenario for this scheme. Even under this best-case assumption for all benchmarks, where control signaling fits in a single slot, the \gls{wur} solution still outperforms them, demonstrating its efficiency.} 

To perform a fair comparison of energy expenditure, we calculate the average energy spent by the system to successfully \revise{serve either a query or an alarm, i.e., the average \emph{energy efficiency}}. To do so, we denote as $\bar{S}$ the average number of packets successfully transmitted within a frame. \revise{In particular, $\bar{S} = \mu_q \bar{p}_q + \bar{\mu}_a \bar{p}_a$}
% \begin{equation}
% \bar{S} = \mu_q \bar{p}_q + \bar{\mu}_a \bar{p}_a,   
% \end{equation}
\revise{is a function of the average number of queries per frame $\mu_q$, and the average number of alarm packets in a frame,  $\bar{\mu}_a = \frac{1}{T_O} \sum_{t=0}^{T_O-1} \E{n_a(t)}$, %\sum_{i=0}^{N} a \, \mathcal{P}(n_a(t) = i)$, 
%where $\mathcal{P}(n_a(t) = i)$ is given 
computed using~\eqref{eq:na(t+1)}.}

\revise{Accordingly, Fig.~\ref{fig:E_G_vs_P} reports the average energy \revise{efficiency}, $\bar{E}/\bar{S}$, as a function of $P$. In addition to the proposed \gls{wur} solution, the figure also includes the results for MR, RR and \gls{fsa}. As in Fig.~\ref{fig:wur_vs_benchmarks}, the performance of RR and \gls{fsa} remains constant with $P$, whereas MR and \gls{wur} vary with the push/query slot allocation.
Overall, the proposed \gls{wur} solution achieves the lowest $\bar{E}/\bar{S}$ for most of the considered range of $P$. In particular, for $P > 5$, it consistently outperforms MR, RR and \gls{fsa}, confirming its superior energy efficiency per successfully served packet.
For the MR approach, $\bar{E}/\bar{S}$ exhibits a non-monotonic trend due to the concave behavior of $\bar{S}$. Similarly to Fig.~\ref{fig:wur_vs_benchmarks}, there exists a value of $P$ (or equivalently $Q$) that maximizes $\bar{p}_q$ and $\bar{p}_a$. The minimum $\bar{E}/\bar{S}$ of approximately $0.068$~mJ is achieved at $P=30$, where $\bar{p}_a \approx 0.88$ and $\bar{p}_q \approx 0.83$, corresponding to the most energy-efficient operating point for MR.
Conversely, for the \gls{wur} approach, $\bar{E}/\bar{S}$ decreases monotonically with $P$,  indicating that energy reduction dominates, reaching up to a 42\% decrease in energy consumption per served packet at $P=35$. This behavior makes the \gls{wur} solution particularly attractive in scenarios where a high priority is placed on push traffic, such as anomaly reporting~\cite{chiariotti2026combined}.
Finally, although the benchmark schemes can accommodate a larger number of queries within a frame—e.g., for $P=5$, $Q'=34$ under MR and $Q''=40$ for RR and \gls{fsa}, compared to $Q=7$ for the \gls{wur} approach due to the $k_w$ slots required for \gls{wus} reception—this advantage is achieved at the expense of a higher energy cost per successfully delivered packet.}

\section{Conclusions}
\label{sec:conclusion}
In this work, we introduced a novel dual-mode communication framework \revise{enabling per-device} dynamic adaptation to integrate data traffic in push- and pull-based communication. %To accommodate both types of data traffic, 
We proposed a unified time-frame \gls{mac} structure in which time is divided into a subframe reserved for push- and pull-based communication. An extensive performance analysis to demonstrate the fundamental trade-offs in \revise{push/pull performance} has been provided \revise{quantitatively demonstrating how prioritizing push-based traffic ensures the collection of informative data, while reduces the system's capacity to respond to external queries, and vice versa.} Finally, \revise{allowing direct implementation of} ultra-low power \glspl{wur} at the nodes, our proposed approach enables \revise{a reduction of up to 42\% in energy consumption per served packet} compared with the traditional communication approaches. 
We believe our introduced framework can serve as a foundation for application-driven, agile \gls{mac} protocol designs for next-generation wireless networks.
\revise{To that end, future work will explore learning-based methods to estimate traffic load and design optimal dynamic resource allocation policies that meet performance goals, accounting for multi-frame deadlines and heterogeneous node capabilities.}

\appendices
\section{Proof of Lemma~\ref{lemma:alarm-success-given-attempts}}
\label{sec:proof-alarm-success}
Let $s$ and $c$ be the \glspl{rv} representing the number of push slots with \emph{exactly one alarm packet} trying to access and the number of slots \emph{with more than one alarm}, respectively. This means that $s = n_s(t)$,% is the number of alarm packets successfully transmitted, 
while $c$ denotes the number of push slots where collisions happen. The frame structure defined in Sec.~\ref{sec:model:frame} constrains $s + c \le P$. The joint probability distribution of $s$ and $c$ results in~\cite{israel_math}:
\begin{equation}
\label{eq:P(s,c|np)}
\begin{split}
    &\mathcal{P}(s, c | n_p(t)) = \Bigl( \frac{P - s + 1 - c}{P}\Bigr) \mathcal{P}(s-1, c | n_p(t)-1) + \\
    &+ \frac{c}{P} \mathcal{P}(s, c | n_p(t)-1) + \frac{s+1}{P} \mathcal{P}(s+1, c-1 | n_p(t)-1),
\end{split}
\end{equation}
given the initial condition $\mathcal{P} (0,0|0) = 1$~\cite{israel_math}.
Eq.~\eqref{eq:P(s,c|np)} is obtained assuming $n_p(t) - 1$ alarmed nodes have already selected their slot and a new alarmed node has to select it. For this, three disjoint events exist~\cite{israel_math}: %, related to the three factors in the sum~\cite{israel_math}: 
$1$) $s-1$ push slots are selected by exactly one alarmed node while $c$ are selected by multiple ones; then, the new alarmed node must select any of the $P - (s - 1) - c$ free push slots; $2$) $s$ push slots are selected by exactly one alarmed node while $c$ are selected by multiple ones; then, the new alarm node must select one of the $c$ push slots; $3$) $s+1$ slots are selected by exactly one alarmed node while $c - 1$ are selected by multiple ones; then a new alarm must selects one of the $s + 1$ push slots.

We compute the marginal \gls{pmf} related to the successful transmission of alarm packets, given $n_p(t)$, summing over $c$ in eq.~\eqref{eq:P(s,c|np)} and obtaining~\eqref{eq: pmf_ns_nf_np}, which complete the proof. \hfill \qed

\section{Proof of Lemma~\ref{lemma:pmf_ny}}
\label{sec:proof-pmf_ny}
Having defined $n_y(t)$ as the number of non-alarmed nodes not receiving any \gls{wus} and $n_z(t)$ as the number of non-alarmed nodes receiving a \gls{wus}, we can write the following relations connecting these \glspl{rv} to the total number of alarmed nodes, $n_a(t)$, the number of nodes receiving a \gls{wus} regardless their state, $n_u(t) = \min(Q, n_q(t))$, and the number of alarmed nodes receiving a \gls{wus}, $n_w(t)$:
\begin{equation}
\begin{cases}
    n_u(t) = \min(Q, n_q(t)) = n_z(t) + n_w(t), \\
    N = n_z(t) + n_y(t) + n_a(t).
    \label{eq:ny-nz-relations}
\end{cases}    
\end{equation}

Using~\eqref{eq:ny-nz-relations}, the conditional \gls{pmf} of $n_z(t)$ is
\begin{equation}
    \begin{aligned}
    &\mathcal{P}(n_z(t) = z| n_w(t) = j, n_q(t) = q) = 
    \\
    &= 
    \begin{cases}
        1, &\text{if } z = \min(Q, q) - j, \\
        0, &\text{otherwise}.
    \end{cases}
    \end{aligned}
    \label{eq:prob_z_unalarmed_WuS}
\end{equation}
which does not depend on $n_a(t)$. Similarly, the conditional \gls{pmf} of $n_y(t)$ results in
\begin{equation}
\begin{split}
&\mathcal{P} (n_y(t) = y| n_z(t) = z, n_a(t) = i) = \\
&=
\begin{cases} 
    1, &\text{if } y = N - z - i,   \\    
    0, &\text{otherwise},
\end{cases}
    \end{split}
    \label{eq: prob(n_y=y|nq, na)}
\end{equation}
which does not depend on $n_w(t)$ and $n_q(t)$. Using the law of total probability to relate the \gls{pmf} of $n_y(t)$ to its conditional \gls{pmf}, we obtain eq.~\eqref{eq: prob_ny} and complete the proof. \qed

\footnotesize
\bibliographystyle{IEEEtran}
\bibliography{bibl}

% INPUT response LETTER
% \include{Response_Letter}

\end{document}